\titleformat{\subsection}{\normalfont\fontsize{10.5}{13}\bfseries}{\thesubsection}{1em}{}
\newtheorem{theorem}{Theorem}
\newtheorem{lemma}{Lemma}
\newcommand{\cut}[1]{}
\newcommand{\ie}{{\it i.e.},\xspace}
\newcommand{\cutatlastminute}[1]{}
\newcommand{\paragraphb}[1]{\vspace{0.1in}\noindent{\bf #1} }
\newcommand{\ankitblue}{\textcolor{black}}
\newcommand{\ankitr}{\textcolor{black}}
\newcommand{\apl}{\langle D \rangle}
\begin{document}
\baselineskip=12bp

\title{High Throughput Data Center Topology Design}

\author{
Ankit Singla, P. Brighten Godfrey, Alexandra Kolla\\
University of Illinois at Urbana--Champaign\\
}

\maketitle

\abstract{\vspace{3pt}
With high throughput networks acquiring a crucial role in supporting data-intensive applications, a variety of data center network topologies have been proposed to achieve high capacity at low cost. While this work explores a large number of design points, even in the limited case of a network of identical switches, no proposal has been able to claim any notion of \emph{optimality}. The case of \emph{heterogeneous} networks, incorporating multiple line-speeds and port-counts as data centers grow over time, introduces even greater complexity.

In this paper, we present the first non-trivial upper-bound on network throughput under uniform traffic patterns for \emph{any} topology with identical switches. We then show that random graphs achieve throughput surprisingly close to this bound, within a few percent at the scale of a few thousand servers. Apart from demonstrating that homogeneous topology design may be reaching its limits, this result also motivates our use of random graphs as building blocks for design of heterogeneous networks. Given a heterogeneous pool of network switches, we explore through experiments and analysis, how the distribution of servers across switches and the interconnection of switches affect network throughput. We apply these insights to a real-world heterogeneous data center topology, VL2, demonstrating as much as $43\%$ higher throughput with the same equipment.
}

\vspace{-4pt}
\section{Introduction}
\label{sec:intro}

Data centers are playing a crucial role in the rise of Internet services 
and big data. In turn, efficient data center operations depend on high capacity
networks to ensure that computations are not bottlenecked on communication.
As a result, the problem of designing massive high-capacity network
interconnects has become more important than ever. Numerous data center 
network architectures have been proposed in response to this need~\cite{fattree, 
vl2, bcube, portland, dcell, cthrough, helios, proteus, mirror-mirror, jellyfish,
legup, rewire}, exploiting a variety of network topologies to achieve high throughput, 
ranging from fat trees and other Clos networks~\cite{fattree,vl2} 
to modified generalized hypercubes~\cite{bcube} to small world networks~\cite{swdc} 
and uniform random graphs~\cite{jellyfish}.



\ankitblue{However, while this extensive literature exposes several points in the topology design space,
even in the limited case of a network of identical switches, it does not answer a fundamental question:
\emph{How far are we from throughput-optimal topology design?}}
The case of \emph{heterogeneous} networks, \ie networks composed of switches or servers
with disparate capabilities, introduces even greater complexity.  
Heterogeneous network equipment is, in fact, the common case in the typical data
center: servers connect to top-of-rack (ToR) switches, which
connect to aggregation switches, which connect to core switches, with each
type of switch possibly having a different number of ports as well some
variations in line-speed. For instance, the ToRs may have both $1$ Gbps and 
$10$ Gbps connections while the rest of the network may have only $10$ Gbps links.
Further, as the network expands over the years and new, more powerful
equipment is added to the data center, one can expect more heterogeneity ---
each year the number of ports supported by non-blocking commodity 
Ethernet switches increases. While line-speed changes are slower, the 
move to $10$ Gbps and even $40$ Gbps is happening now, and higher line-speeds
are expected in the near future.

In spite of heterogeneity being commonplace in data center networks, very little 
is known about heterogeneous network design. For instance, there is no clarity on whether the 
traditional ToR-aggregation-core organization is superior to a ``flatter'' network without such a switch hierarchy;
or on whether powerful core switches should be connected densely together, or spread more evenly throughout the network.

\smallskip{}

\ankitblue{\emph{The goal of this paper is to develop an understanding of how to design high throughput network topologies at limited cost, even when heterogeneous components are involved, and to apply this understanding to improve real-world data center networks.}}
This is nontrivial: Network topology design is hard in general, because of the combinatorial 
explosion of the number of possible networks with size. Consider, for example, the related\footnote{Designing for low network diameter is related to designing for
high throughput, because shorter path lengths translate to the network using less capacity
to deliver each packet; see discussion in~\cite{jellyfish}.}
\emph{degree-diameter problem}~\cite{degree-diameter}, a well-known graph theory problem
where the quest is to pack the largest possible number of nodes 
into a graph while adhering to constraints on both the degree and the diameter. Non-trivial optimal solutions are known for a total of only seven combinations of degree 
and diameter values, and the largest of these optimal networks has only $50$ nodes! The lack of symmetry
that heterogeneity introduces only makes these design problems more challenging.

To attack this problem, we decompose it into several steps which together give a high level understanding of network topology design, and yield benefits to real-world data center network architectures.  First, we address the case of networks of homogeneous servers and switches.  Second, we study the heterogeneous case, optimizing the distribution of servers across different classes of switches, and the pattern of interconnection of switches.  Finally, we apply our understanding to a deployed data center network topology.  Following this approach, our key results are as follows.

\cut{On the other hand, for the homogeneous case, Jellyfish is within a 
few percent of the upper bound on throughput for \emph{any} network built with the same equipment~\cite{topocomp}.
Thus, in our investigation of heterogeneous topology design, we leverage Jellyfish's
success with random graphs, and couple it with our decomposition of the problem
into the following smaller pieces which, put together, yield a complete picture: (a)
Deciding on a distribution of servers across switches; (b) Deciding on an interconnection
of switches; (c) Deciding on what switches to purchase given multiple options and a budget;
and (d) Informing the design of switches for them to maximally benefit the network. \cut{We note that
these decisions are not independent -- for instance, the server distribution clearly depends
on how many ports on the each switch have been used in the switch interconnection.}
Following this approach, the contributions of this work are as follows.}

\paragraphb{(1) Near-optimal topologies for homogeneous networks.} We present an upper bound on network throughput for any topology with identical switches, as a function of the number of switches and their degree (number of ports).  Although designing \emph{optimal} topologies is infeasible, we demonstrate that random graphs achieve throughput surprisingly close to this bound---within a few percent at the scale of a few thousand servers for random permutation traffic.  This is particularly surprising in light of the much larger gap between bounds and known graphs in the related degree-diameter problem~\cite{degree-diameter}\footnote{For instance, for degree $5$ and diameter $4$, the best known graph has only $50\%$ of the number of nodes in the best known upper bound~\cite{degdiaGap}. Further, this gap grows larger with both degree and diameter.}.


We caution the reader against over-simplifying this result to `flatter topologies are better': Not all `flat' or `direct-connect' topologies (where all switches connect to servers) perform equally. For example, random graphs have roughly $30\%$ higher throughput than hypercubes at the scale of $512$ nodes, and this gap increases with scale~\cite{sigmetricsdraft}. Further, the notion of `flat' is not even well-defined for heterogeneous networks.

\paragraphb{(2) High-throughput heterogeneous network design.} We use random graphs as building blocks for heterogeneous network design by first optimizing the volume of connectivity between groups of nodes, and then forming connections randomly within these volume constraints. Specifically, we first show empirically that in this framework, for a set of switches with different port counts but uniform line-speed, attaching servers to switches in proportion to the switch port count is optimal.

Next, we address the interconnection of multiple types of switches. For tractability, we limit our investigation to two switch types.  Somewhat surprisingly, we find that a wide range of connectivity arrangements provides nearly identical throughput. A useful consequence of this result is that there is significant opportunity for clustering switches to achieve shorter cable lengths on average, without compromising on throughput. Jellyfish~\cite{jellyfish} demonstrated this experimentally. Our results provide the theoretical underpinnings of such an approach.

Finally, in the case of multiple line-speeds, we show that complex bottleneck behavior may appear and there may be multiple configurations of equally high capacity. 

\paragraphb{(3) Applications to real-world network design.} The topology proposed in VL2~\cite{vl2} incorporates heterogeneous line-speeds and port-counts, and has been deployed in Microsoft's cloud data centers.\footnote{Based on personal exchange, and mentioned publicly at \url{http://research.microsoft.com/en-us/um/people/sudipta/}.} We show that using a combination of the above insights, VL2's throughput can be improved by as much as $43\%$ at the scale of a few thousand servers simply by rewiring existing equipment, with gains increasing with network size.  

\medskip
While a detailed treatment of other related work follows in \S\ref{sec:related}, 
the \textbf{Jellyfish}~\cite{jellyfish} proposal merits attention here since it is also
based on random graphs. Despite this shared ground, Jellyfish does not address either of the central questions addressed
by our work: (a) How close to optimal are random graphs for the homogeneous case? and
(b) How do we network \emph{heterogeneous} equipment for high throughput?
In addition, unlike Jellyfish, by analyzing how network metrics like cut-size, path length, 
and utilization impact throughput, we attempt to develop an \emph{understanding} of network 
design.

\cut{We note that while the applications in this paper concern data centers, the question of network design for high capacity is fundamental, and is likely to be of independent interest.}

\vspace{-4pt}
\section{Background and Related Work}
\label{sec:related}

High capacity has been a core goal of communication networks since their inception.  How that goal manifests in network topology, however, has changed with systems considerations. Wide-area networks are driven by geographic constraints such as the location of cities and railroads.  Perhaps the first high-throughput networks not driven by geography came in the early 1900s. To interconnect telephone lines at a single site such as a telephone exchange, \emph{nonblocking} switches were developed which could match inputs to any permutation of outputs.  Beginning with the basic crossbar switch which requires $\Theta(n^2)$ size to interconnect $n$ inputs and outputs, these designs were optimized to scale to larger size, culminating with the Clos network developed at Bell Labs in 1953~\cite{clos1953study} which constructs a nonblocking interconnect out of $\Theta(n \log n)$ constant-size crossbars.

In the 1980s, supercomputer systems began to reach a scale of parallelism for which the topology connecting compute nodes was critical.  Since a packet in a supercomputer is often a low-latency memory reference (as opposed to a relatively heavyweight TCP connection) traversing nodes with tiny forwarding tables, such systems were constrained by the need for very simple, loss-free and deadlock-free routing.  As a result the series of designs developed through the 1990s have simple and regular structure, some based on non-blocking Clos networks and others turning to  butterfly, hypercube, 3D torus, 2D mesh, and other designs~\cite{leighton}.

In commodity compute clusters, increasing parallelism, bandwidth-intensive big data applications and cloud computing have driven a surge in data center network architecture research. An influential $2008$ paper of Al-Fares et al.~\cite{fattree} proposed moving from a traditional data center design utilizing expensive core and aggregation switches, to a network built of small components which nevertheless achieved high throughput --- a folded Clos or ``fat-tree'' network. This work was followed by several related designs including Portland~\cite{portland} and VL2~\cite{vl2}, a design based on small-world networks~\cite{swdc}, designs using servers for forwarding~\cite{bcube, dcell, mdcube}, and designs incorporating optical switches~\cite{helios,cthrough}.

Jellyfish~\cite{jellyfish} demonstrated, however, that Clos networks are sub-optimal.  In particular, \cite{jellyfish} constructed a random degree-bounded graph among switch-to-switch links, and showed roughly $25\%$ greater throughput than a fat-tree built with the same switch equipment.\cut{, as well as higher throughput than small-world-based topologies~\cite{swdc}. } In addition, \cite{jellyfish} showed quantitatively that random networks are easier to incrementally expand --- adding equipment simply involves a few random link swaps. Several challenges arise with building a completely unstructured network; \cite{jellyfish} demonstrated effective routing and congestion control mechanisms, and showed that cable optimizations for random graphs can make cable costs similar to an optimized fat-tree while still obtaining substantially higher throughput than a fat-tree.

\cut{Our work is closely related to Jellyfish in that we use random networks as a foundation.  However, \cite{jellyfish} did not address the question of how close random networks are to the optimal topology,\footnote{The Jellyfish paper did compare with several of the best known degree-diameter networks, but while it is suggestive, this is not a comparison with \emph{throughput-optimal} networks.} and studied only homogeneous networks.}

While the literature on homogeneous network design is sizeable, very little is known about heterogeneous topology design, perhaps because earlier supercomputer topologies (which reappeared in many recent data center proposals) were generally constrained to be homogeneous.  VL2~\cite{vl2} provides a point design, using multiple line-speeds and port counts at different layers of its hierarchy; we compare with VL2 later (\S\ref{sec:vl2}).
The only two other proposals that address heterogeneity are 
LEGUP~\cite{legup} and REWIRE~\cite{rewire}. LEGUP uses an optimization framework
to search for the cheapest Clos network achieving
desired network performance. Being restricted
to Clos networks impairs LEGUP greatly: Jellyfish achieves the same network
expansion as LEGUP at $60\%$ lower cost~\cite{jellyfish}. REWIRE removes this
restriction by using a local-search optimization (over a period of several days of compute time at the scale of $3200$ servers) to continually improve upon an 
initial feasible network.
REWIRE's code is not available so a comparison has not been possible.  But more fundamentally, all of the above approaches are either point designs~\cite{vl2} or heuristics~\cite{legup, rewire} which by their blackbox nature, provide neither
an \emph{understanding} of the solution space, nor any evidence of near-optimality.

\cut{While REWIRE's code not being available makes a quantitative
comparison hard, we make the following arguments in favor of our approach: (a) at the
least, we shall provide a \emph{good} initial network for REWIRE to optimize -- given that
REWIRE's optimization can take days\footnote{The largest network tested in REWIRE has $3200$ servers. 
All results in the paper are from $72$ hours of REWIRE-optimization.} even for 
a few thousand servers this in itself is significant; (b) as
our parallel work~\cite{topocomp} shows, there is very little room for improvement beyond 
the random graph -- even at the scale of a few thousand servers, it is within 
a few percent of the \emph{upper bound} on the capacity of any graph built 
with identical equipment; (c) our work yields insights into the structure
of an optimized network, as opposed to REWIRE's opacity; and (d) we address new and
interesting questions, including the distribution of servers across
switches, and rethinking the provisioning of ports on switches to enable the construction 
of higher capacity networks.}

\vspace{-4pt}
\section{Simulation Methodology}
\label{sec:method}

Our experiments measure the capacity of network topologies. For most of this paper, our goal is to study topologies explicitly independent of systems-level issues such as routing and congestion control. Thus, we model network traffic using fluid splittable flows which are routed optimally. Throughput is then the solution to the standard maximum concurrent multi-commodity flow problem~\cite{leightonmaxflow}. Note that by maximizing the minimum flow throughput, this model incorporates a strict definition of fairness. We use the CPLEX linear program solver~\cite{cplex} to obtain the maximum flow.  Unless otherwise specified, the workload we use is a random permutation traffic matrix, where each server sends traffic to (and receives traffic from) exactly one other server.


In \S\ref{sec:practice}, we revisit these assumptions to address systems concerns.
We include results for several other traffic matrices besides permutations. We also show that throughput within a few percent of the optimal flow values from CPLEX can be achieved \emph{after} accounting for packet-level routing and congestion control inefficiencies.

Any comparisons between networks are made using identical switching equipment, unless noted otherwise. 

Across all experiments, we test a wide range of parameters, varying the network size, node degree, and oversubscription. A representative sample of results is included here. Most experiments average results across $20$ runs, with standard deviations in throughput being $\sim$$1\%$ of the mean except at small values of throughput in the uninteresting cases. Exceptions are noted in the text.

Our simulation tools are publicly available~\cite{codelink}.

\vspace{-4pt}
\section{Homogeneous Topology Design}
\label{sec:jellyfish}

\begin{figure}[t]
\centering
\subfigure[]{ \label{fig:optimal:thput}\includegraphics[width=2.2in]{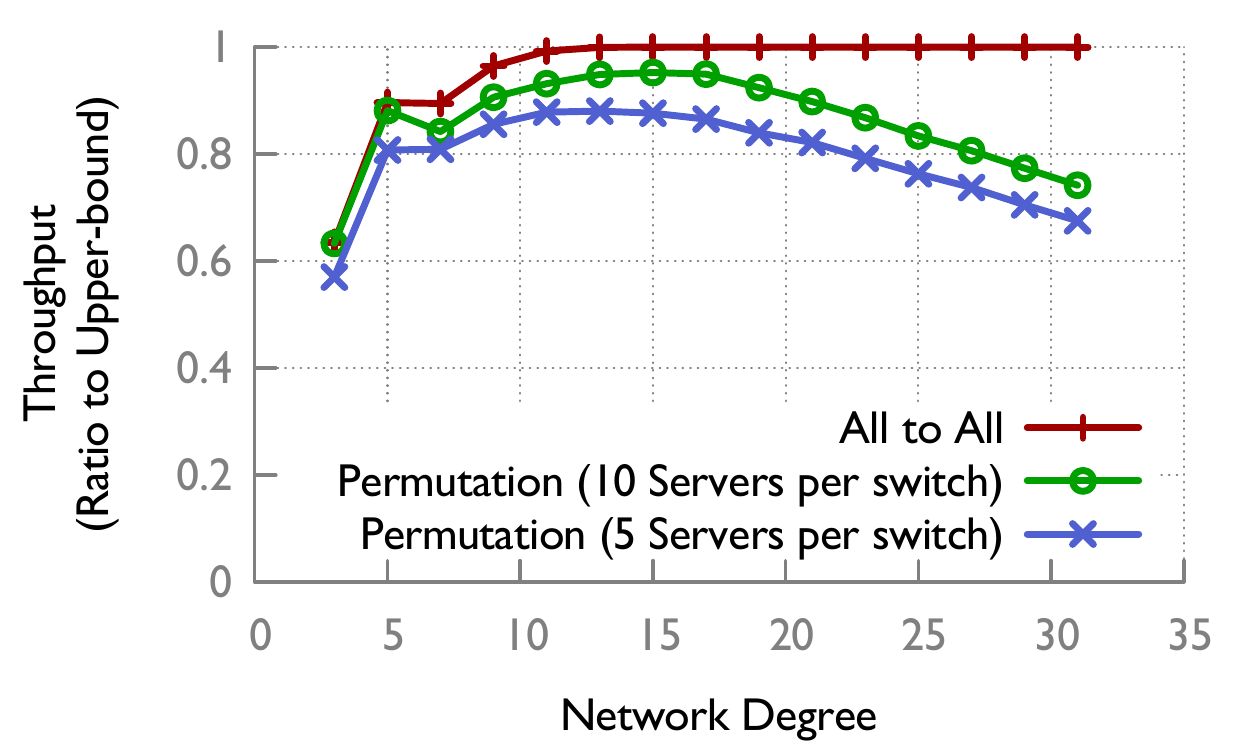}}
\vspace{-8pt}
\subfigure[]{ \label{fig:optimal:pl}\includegraphics[width=2.2in]{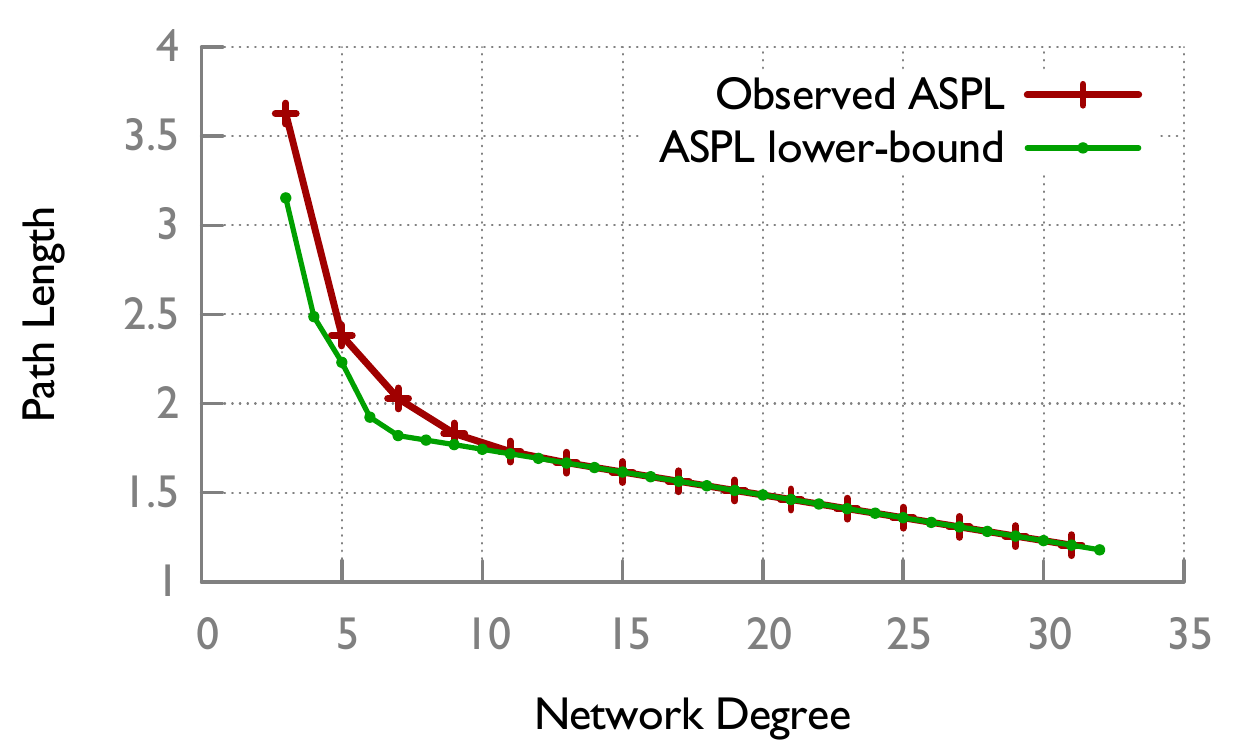}}
\caption{\small \em Random graphs versus the bounds: (a) Throughput and (b) average shortest path length (ASPL) in random regular graphs compared to the respective 
upper and lower bounds for any graph of the same size and degree. The number of switches is fixed to $40$ throughout. The network 
becomes denser rightward on the x-axis as the degree increases.}
\label{fig:optimal}
\vspace{-8pt}
\end{figure}

\begin{figure}[t]
\centering
\subfigure[]{ \label{fig:optimal:thput2}\includegraphics[width=2.2in]{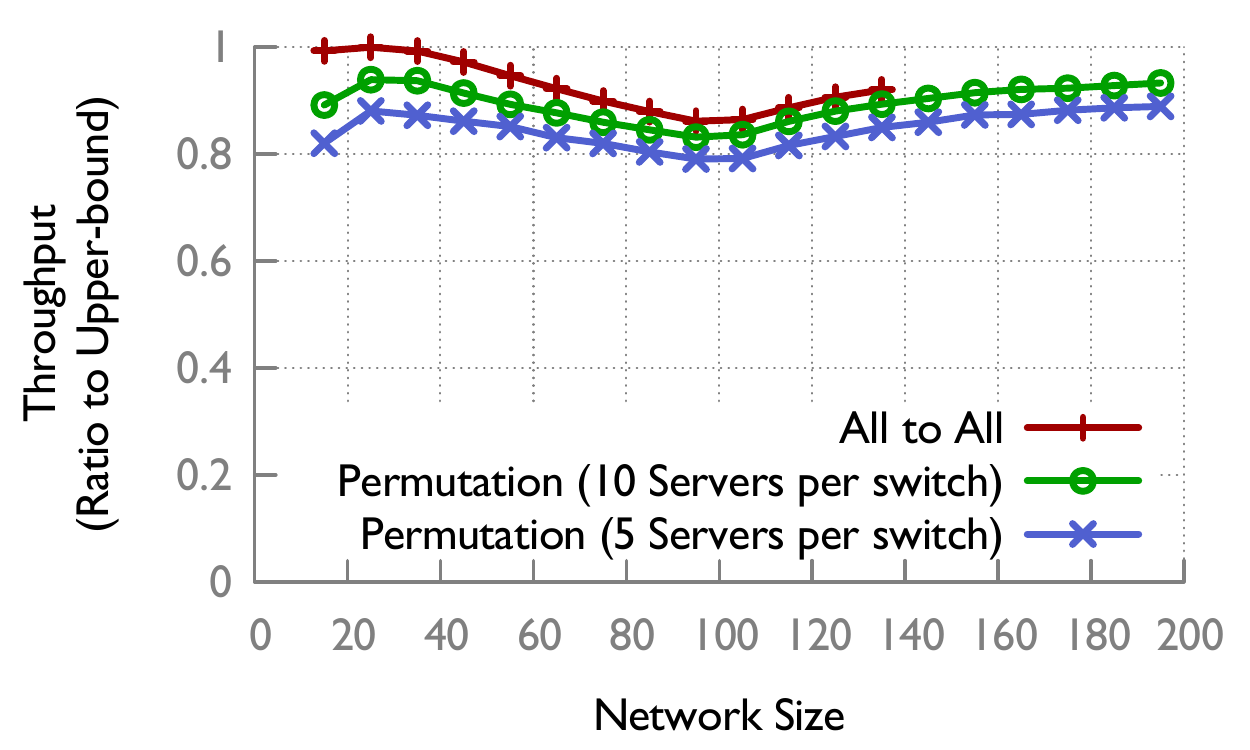}}
\vspace{-8pt}
\subfigure[]{ \label{fig:optimal:pl2}\includegraphics[width=2.2in]{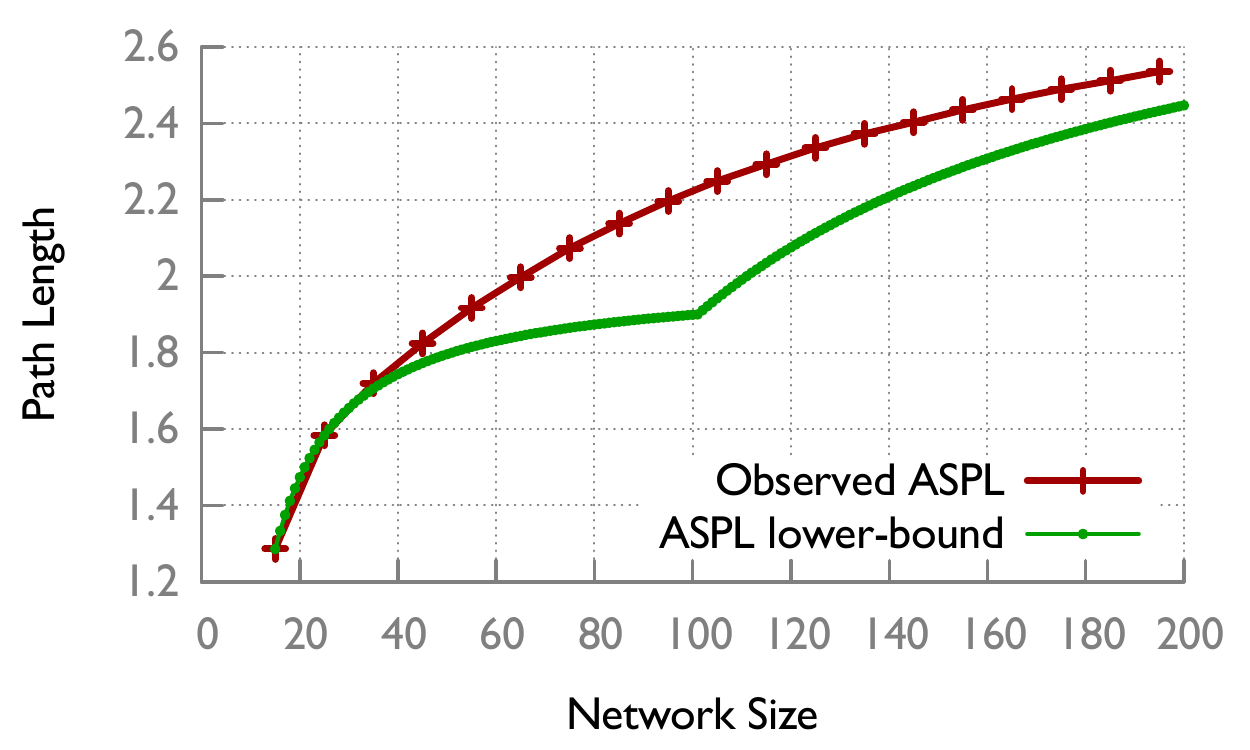}}
\caption{\small \em Random graphs versus the bounds: (a) Throughput and (b) average shortest path length (ASPL) in random regular graphs compared to the respective 
upper and lower bounds for any graph of the same size and degree. The degree is fixed to $10$ throughout. The network 
becomes sparser rightward on the x-axis as the number of nodes increases.}
\label{fig:optimal2}
\vspace{-8pt}
\end{figure}

In this setting, we have $N$ switches, each with
$k$ ports. The network is required to support $S$ servers. The
symmetry of the problem suggests that each switch be connected to
the same number of servers. (We assume for convenience that $S$ is divisible by $N$.) 
Intuitively, spreading servers across switches in a manner that deviates 
from uniformity will create bottlenecks at the switches with larger 
numbers of servers. Thus, we assume that each switch uses out of its $k$ 
ports, $r$ ports to connect to other switches, 
and $k-r$ ports for servers. It is also assumed that each network edge is of
unit capacity.

The design space for such networks is the set of
all subgraphs $H$ of the complete graph over $N$ nodes $K_N$, such that 
$H$ has degree $r$. For generic, application-oblivious design, 
we assume that the objective is to maximize throughput under
a uniform traffic matrix such as all-to-all traffic or random permutation 
traffic among servers. To account for fairness, the network's throughput 
is defined as the maximum value of the minimum flow between 
source-destination pairs. We denote such a throughput 
measurement of an $r$-regular subgraph $H$ of $K_N$ under uniform traffic 
with $f$ flows by $T_{H}(N, r, f)$.  The average path length of the network is denoted by $\apl$.

For this scenario, we prove a simple upper bound on the throughput achievable by 
\emph{any} hypothetical network.

\begin{theorem}\label{thm:thput_pl}
$T_{H}(N, r, f) \leq \frac{Nr}{\apl f}$.
\end{theorem}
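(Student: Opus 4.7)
The plan is a direct link-capacity/path-length counting bound. For any feasible multi-commodity flow delivering throughput $T$ for each of the $f$ server-to-server demands, the aggregate switch-link bandwidth consumed equals $T \sum_{i=1}^{f} \ell_i$, where $\ell_i$ denotes the (flow-weighted average) length of the path(s) used by demand $i$ in the switch graph. This consumption must be at most the total capacity of the switch graph. On the supply side, each of the $N$ switches uses $r$ of its ports for switch-to-switch links, yielding $Nr/2$ undirected unit-capacity edges, or a total of $Nr$ units of capacity when each edge is counted in both directions. This gives the inequality $T \sum_{i=1}^{f} \ell_i \le Nr$.

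Second, I would lower-bound $\sum_i \ell_i$ using the definition of $\apl$. Each $\ell_i$ is at least the shortest-path distance $d_i$ between the switches serving flow $i$'s endpoints, since a routing cannot beat the graph distance. Under a uniform workload---all-to-all, or random permutation in expectation---the induced distribution on switch pairs is exactly the one used to define $\apl$, so $\sum_{i=1}^{f} d_i = f \cdot \apl$. Combining with the capacity bound yields $T \cdot f \cdot \apl \le Nr$, which rearranges to the claim.

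The main obstacle is being careful about constants and conventions rather than anything deep: specifically whether the $Nr$ on the right-hand side reflects unit capacity per direction of each edge (so that total capacity is $Nr$) as opposed to per undirected edge (so that it is $Nr/2$), and whether $\ell_i$ should count only switch-to-switch hops or also the server-to-switch hops at either end (only the former contributes to consumption of the $Nr$ switch-link capacity budget). A secondary subtlety is handling splittable flows: one must decompose flow $i$ into elementary path-flows and treat $\ell_i$ as the flow-weighted convex combination of those path lengths, so that $T \ell_i$ correctly captures the aggregate switch-edge capacity consumed by demand $i$. With those conventions fixed, the proof is a one-line accounting argument.
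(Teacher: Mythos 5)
Your proposal is correct and follows essentially the same argument as the paper: count total directed capacity as $Nr$, lower-bound the capacity consumed by each flow via its shortest-path distance, and use uniformity of the traffic matrix to replace $\sum_i d_i$ with $\apl f$. The only cosmetic difference is that the paper explicitly writes each flow's rate as $x_i \geq T_H(N,r,f)$ rather than assuming all flows run at exactly $T$, but this does not change the substance.
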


\begin{proof}
The network has a total of $Nr$ edges (counting both directions) of unit capacity, 
for a total capacity of $Nr$. A flow $i$ whose end points are a shortest
path distance $d_i$ apart, consumes at least $x_i d_i$ units of capacity in
to obtain throughput $x_i$. Thus, the total capacity consumed by all
flows is at least $\displaystyle\sum\limits_i x_i d_i$. Given that we defined
network throughput $T_{H}(N, r, f)$ as the minimum flow throughput, $\forall i, x_i \geq T_{H}(N, r, f)$.
Total capacity consumed is then at least $T_{H}(N, r, f) \displaystyle\sum\limits_i d_i$.
For uniform traffic patterns such as random permutations and all-to-all traffic, 
$\displaystyle\sum\limits_i d_i = \langle D \rangle f$ because the average source-destination
distance is the same as the graph's average shortest path distance.
Also, total capacity consumed cannot exceed the network's capacity. Therefore,
$\langle D \rangle f T_{H}(N, r, f) \leq Nr$, rearranging which yields the result.
\end{proof}

Further,~\cite{cerfPaper} proves a lower bound on the average shortest path length
of any $r$-regular network of size $N$:

\vspace{-12pt}
\begin{align*}
\langle D \rangle \geq d^{*} &= \frac{\displaystyle\sum\limits_{j=1}^{k-1} jr(r-1)^{j-1} + kR}{N-1}\\
\text{where~~~} R &= N-1 - \displaystyle\sum\limits_{j=1}^{k-1} r(r-1)^{j-1} \geq 0
\end{align*}
and $k$ is the largest integer such that the \emph{inequality} holds.

This result, together with Theorem~\ref{thm:thput_pl}, yields an upper bound on throughput: $T_{H}(N, r, f) \leq \frac{Nr}{f d^*}$.
Next, we show experimentally that random regular graphs
achieve throughput close to this bound.

A {\em random regular graph}, denoted as RRG($N$, $k$, $r$), is a graph sampled 
uniform-randomly from the space of all $r$-regular graphs. This is a well-known construct in graph theory.
As Jellyfish~\cite{jellyfish} showed, RRGs compare favorably against traditional fat-tree topologies, supporting
a larger number of servers at full throughput. However, that fact leaves open the possibility that there
are network topologies that achieve significantly higher throughput than even RRGs. Through experiments, we compare
the throughput RRGs achieve to the upper bound we derived above, and find that our results eliminate this possibility.

Fig.~\ref{fig:optimal:thput} and Fig.~\ref{fig:optimal:thput2} compare throughput achieved by RRGs to the 
upper bound on throughput for any topology built with the same equipment. Fig.~\ref{fig:optimal:thput} shows
this comparison for networks of increasing density (\ie the degree $r$ increases, while the number of nodes $N$
remains fixed at $40$) for $3$ uniform traffic matrices: a random permutation among servers with $5$ servers 
at each switch, another with $10$ servers at each switch, and an all-to-all traffic matrix.
For the high-density traffic pattern, \ie all-to-all traffic, \emph{exact optimal} throughput is achieved by the random
graph for degree $r\geq13$. 
Fig.~\ref{fig:optimal:thput2} shows a similar comparison for increasing size $N$, 
with $r=10$. Our simulator does not scale for all-to-all traffic because the number
of commodities in the flow problem increases as the square of the network size for this pattern. 
Fig.~\ref{fig:optimal:pl} and~\ref{fig:optimal:pl2} compare average shortest
path length in RRGs to its lower bound. 
For both large
network sizes, and very high network density, RRGs are surprisingly
close to the bounds (right side of both figures).

\begin{figure}
\centering
\includegraphics[width=2.4in]{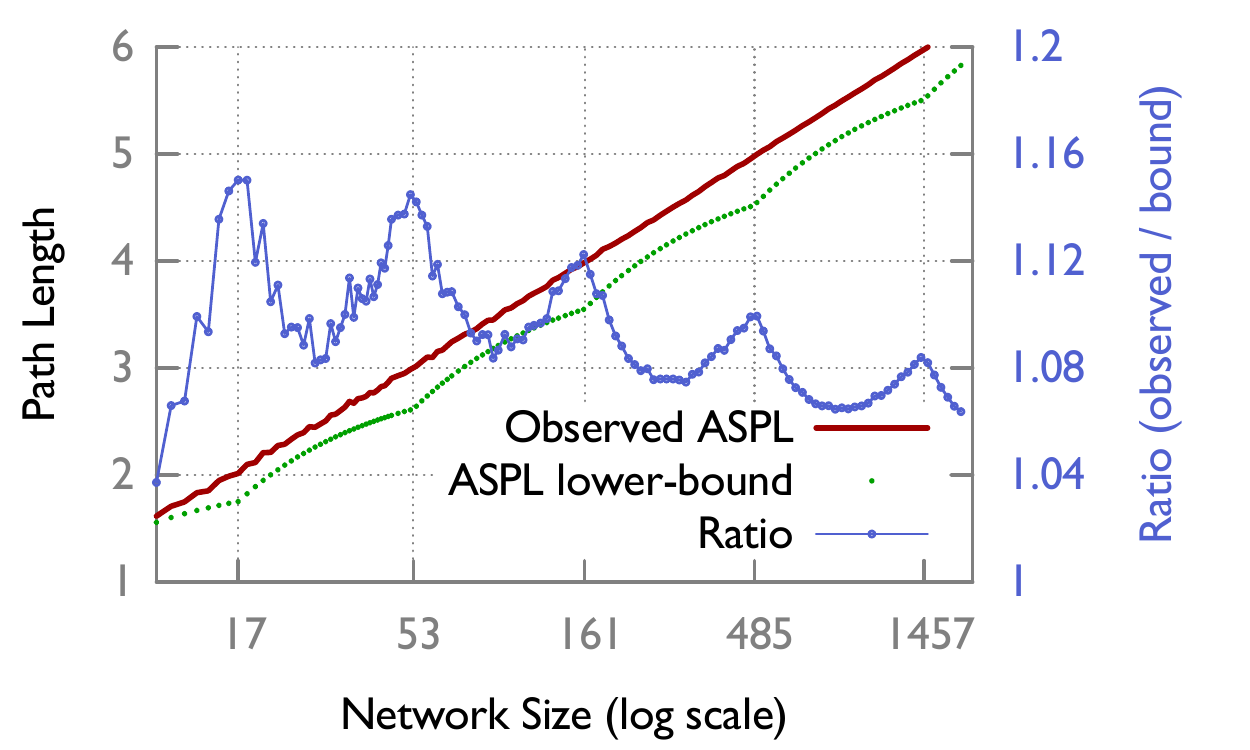}
\vspace{-6pt}
\caption{\small \em ASPL in random graphs compared to the lower bound. The degree is fixed to $4$ throughout. The bound shows a ``curved step" behavior. In addition, as the network size increases, the ratio of observed ASPL to the lower bound approaches $1$. The x-tics correspond to the points where the bound begins new distance levels.}
\label{fig:boundcompare1}
\vspace{-8pt}
\end{figure}

The curve in Fig.~\ref{fig:optimal:pl2} has two interesting features.  First, there is a ``curved step" behavior, with the first step at network size up to $N=101$, and the second step beginning thereafter. To see why this occurs, observe that the bound uses a tree-view of distances from any node --- for a network with degree $d$, $d$ nodes are assumed to be at distance $1$, $d(d-1)$ at distance $2$, $d(d-1)^2$ at distance $3$, etc. While this structure minimizes path lengths, it is optimistic --- in general, not all edges from nodes at distance $k$ can lead outward to unique new nodes\footnote{In fact, prior work shows that graphs with this structure do not exist for $d\geq3$ and diameter $D\geq3$~\cite{degDiaSurvey}.}. As the number of nodes $N$ increases, at some point the lowest level of this hypothetical tree becomes full, and a new level begins.  These new nodes are more distant, so average path length suddenly increases more rapidly, corresponding to a new ``step" in the bound. A second feature is that as $N\to\infty$, the ratio of observed ASPL to the lower bound approaches $1$. This can be shown analytically by dividing an upper bound on the random regular graph's diameter~\cite{diabound} (which also upper-bounds its ASPL) by the lower bound of~\cite{cerfPaper}. For greater clarity, we show in Fig.~\ref{fig:boundcompare1} similar behavior for degree $d=4$, which makes it easier to show many ``steps". \cut{We also note that the upper bound on the random graph's diameter given in Jellyfish~\cite{jellyfish}, which is trivially, also an upper bound on its ASPL, asymptotically matches the lower bound given above.}

The near-optimality of random graphs demonstrated here leads us to use them as a building block for the more complicated case of heterogeneous topology design.

\vspace{-4pt}
\begin{figure*}[t]
\centering
\subfigure[]{ \label{fig:serverdist:portproportion}\includegraphics[width=2.11in]{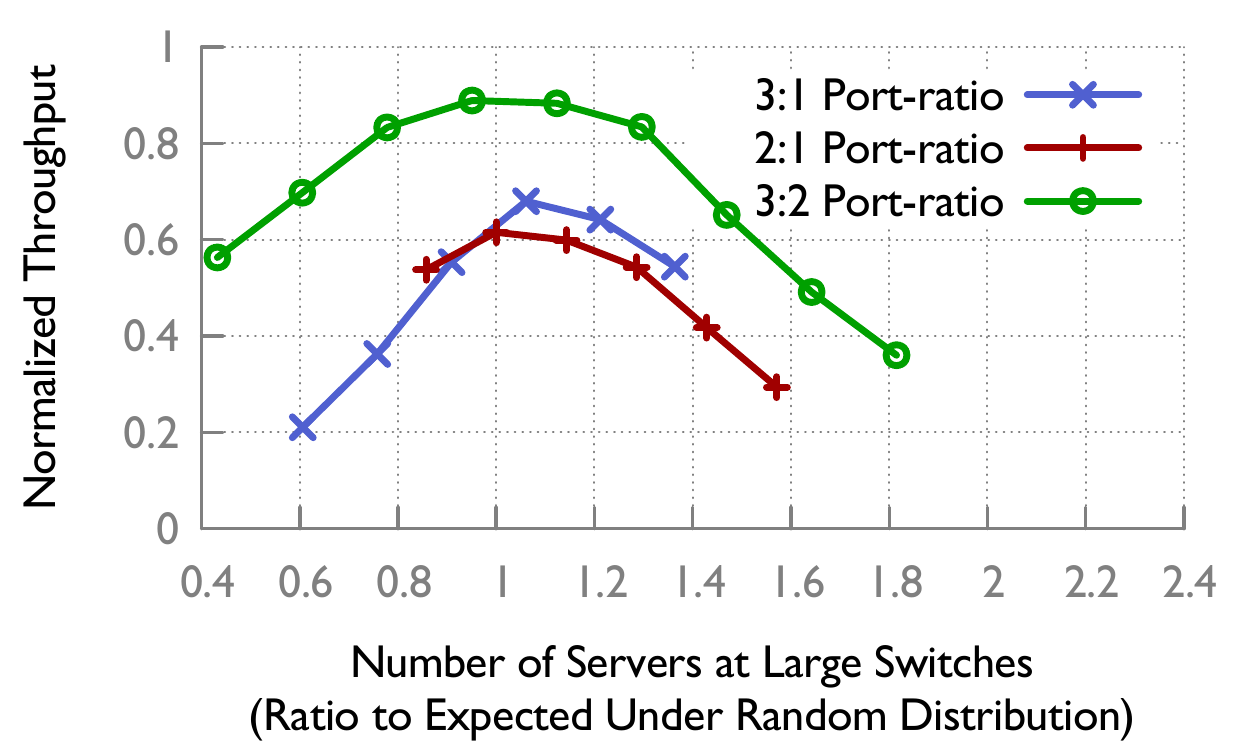}}
\subfigure[]{ \label{fig:serverdist:switchproportion}\includegraphics[width=2.11in]{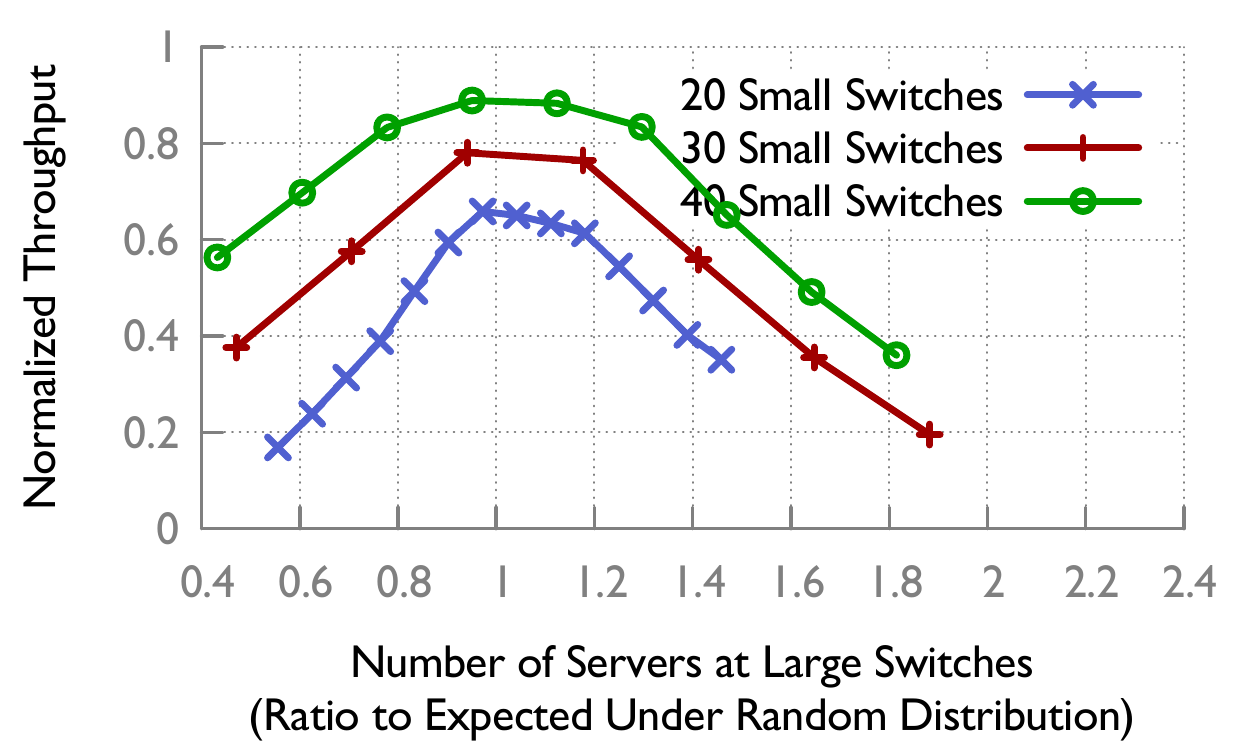}}
\subfigure[]{ \label{fig:serverdist:oversub}\includegraphics[width=2.11in]{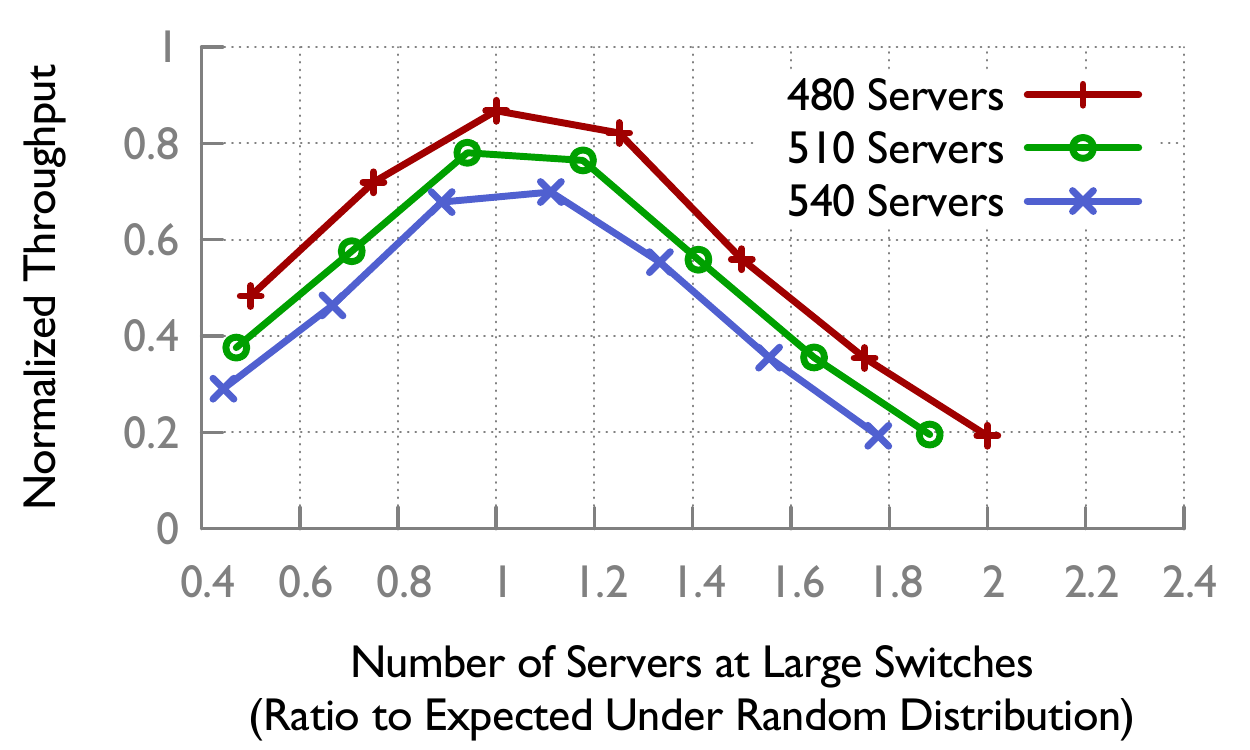}}
\caption{\small \em Distributing servers across switches: Peak throughput is achieved when servers are distributed proportionally to port 
counts \ie x-axis=1, regardless of (a) the absolute port 
counts of switches; (b) the absolute counts of switches of each type; and (c) oversubscription in
the network.}
\label{fig:serverdist}
\end{figure*}

\section{Heterogeneous Topology Design}

With the possible exception of a scenario where a new data center is being built
from scratch, it is unreasonable to expect deployments to have the same, homogeneous
networking equipment. Even in the `greenfield' setting, networks may potentially
use heterogeneous equipment. While our results above show that random graphs achieve 
close to the best possible throughput in the homogeneous network design
setting, we are unable, at present, to make a similar claim for heterogeneous networks,
where node degrees and line-speeds may be different. However, in this section, we 
present for this setting, interesting experimental results which challenge traditional 
topology design assumptions. Our discussion here is mostly limited to the scenario where
there are two kinds of switches in the network; generalizing our results for higher
diversity is left to future work.

\subsection{Heterogeneous Port Counts}
\label{subsec:heterports}

We consider a simple scenario where the network is composed of two types of switches with different 
port counts (line-speeds being uniform throughout). Two natural questions arise that we shall explore
here: (a) How should we distribute servers across the two switch types to maximize
throughput? (b) Does biasing the topology in favor of more connectivity between larger 
switches increase throughput?

First, we shall assume that the interconnection is an unbiased random graph built over the remaining connectivity
at the switches after we distribute the servers. Later, we shall fix the server distribution but
bias the random graph's construction. Finally we will examine the combined effect of varying both
parameters at once.

\paragraphb{Distributing servers across switches:} We vary the numbers of servers apportioned to large and small switches, 
while keeping the total number of servers and switches the same\footnote{Clearly, across the same type of 
switches, a non-uniform server-distribution will cause bottlenecks and sub-optimal throughput.}.
We then build a random graph over the ports that remain unused after attaching the servers. We repeat this 
exercise for several parameter settings, varying the numbers of switches, ports, and servers. A representative 
sample of results is shown in Fig.~\ref{fig:serverdist}. The particular configuration in 
Fig.~\ref{fig:serverdist:portproportion} uses $20$ larger and $40$ smaller switches, with the port counts for the three
curves in the figure being $30$ and $10$ ($3$:$1$), $30$ and $15$ ($2$:$1$), and $30$ and $20$ ($3$:$2$) 
respectively. Fig.~\ref{fig:serverdist:switchproportion} uses $20$ larger switches ($30$ ports) and
$20$, $30$ and $40$ smaller switches ($20$ ports) respectively for its three curves. Fig.~\ref{fig:serverdist:oversub}
uses the same switching equipment throughout: $20$ larger switches ($30$ ports) and $30$ smaller switches ($20$ ports),
with $480$, $510$, and $540$ servers attached to the network. Along the $x$-axis in each figure, the number of
servers apportioned to the larger switches increases. The $x$-axis label normalizes this number to the \emph{expected} number
of servers that would be apportioned to large switches if servers were spread randomly across all the ports in
the network. As the results show, distributing servers in proportion to switch degrees (\ie $x$-axis$=1$) is optimal. 

\begin{figure}
\centering
\includegraphics[width=2.2in]{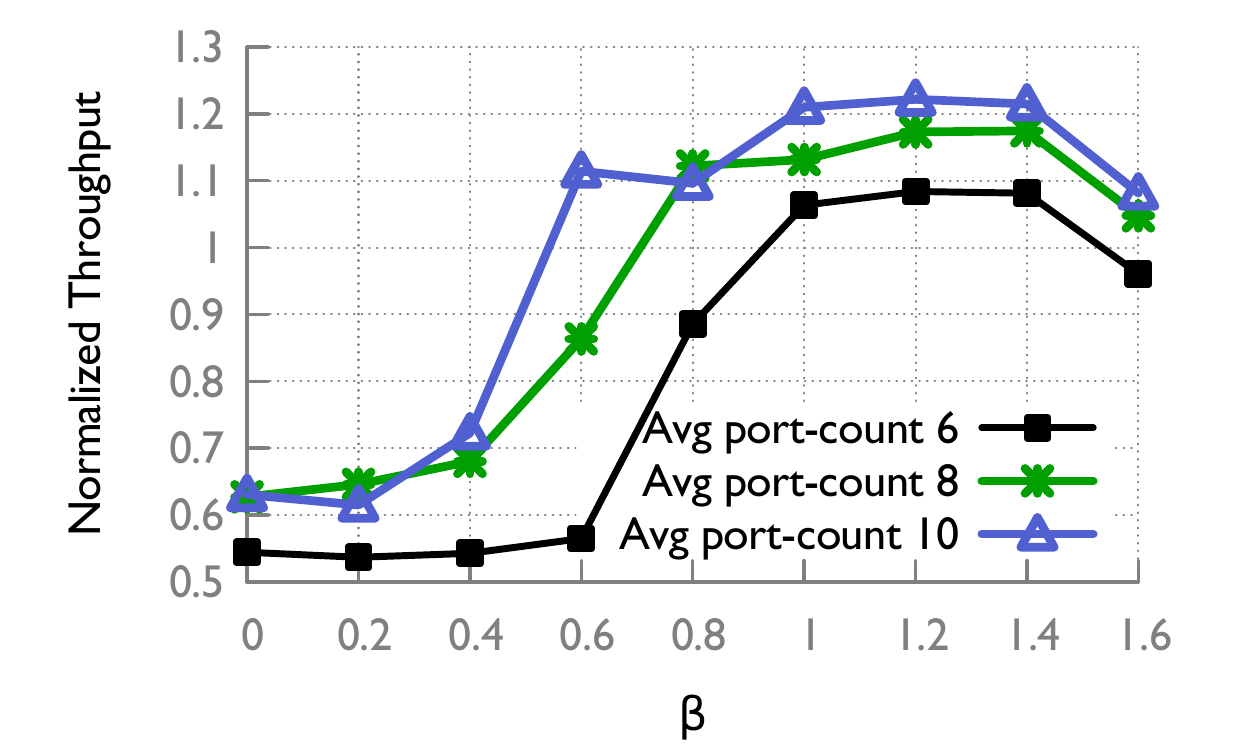}
\caption{\small \em Distributing servers across switches: Switches have port-counts 
distributed in a power-law distribution. Servers are distributed in proportion to the $\beta^{th}$ power of switch port-count.
Distributing servers in proportion to degree ($\beta = 1$) is still among the optimal configurations. \cut{($N$$=$$120$ switches.)}}
\label{fig:powerlaw}
\end{figure}

This result, while simple, is remarkable in the light of current topology design practices, where
top-of-rack switches are the only ones connected directly to servers. 

Next, we conduct an experiment with a diverse set of switch types, rather than just two.
We use a set of switches such that their port-counts $k_i$ follow a power 
law distribution. We attach servers at each switch $i$ in proportion to $k_{i}^{\beta}$, using the remaining
ports for the network. The total number of servers is kept constant as we test various values of $\beta$.
(Appropriate distribution of servers is applied by rounding where necessary to achieve this.) $\beta=0$ implies
that each switch gets the same number of servers regardless of port count, while $\beta=1$ is the same as
port-count-proportional distribution, which was optimal in the previous experiment. The results are shown in Fig.~\ref{fig:powerlaw}. $\beta=1$ is 
optimal (within the variance in our data), but so are other values of $\beta$ such as $1.2$ and $1.4$.
The variation in throughput is large at both extremes of the plot, with the standard deviation being as much
as $10\%$ of the mean, while for $\beta \in \{1, 1.2, 1.4\}$ it is $<4\%$.

\begin{figure*}[t]
\centering
\subfigure[]{ \label{fig:hhbias:portproportion}\includegraphics[width=2.11in]{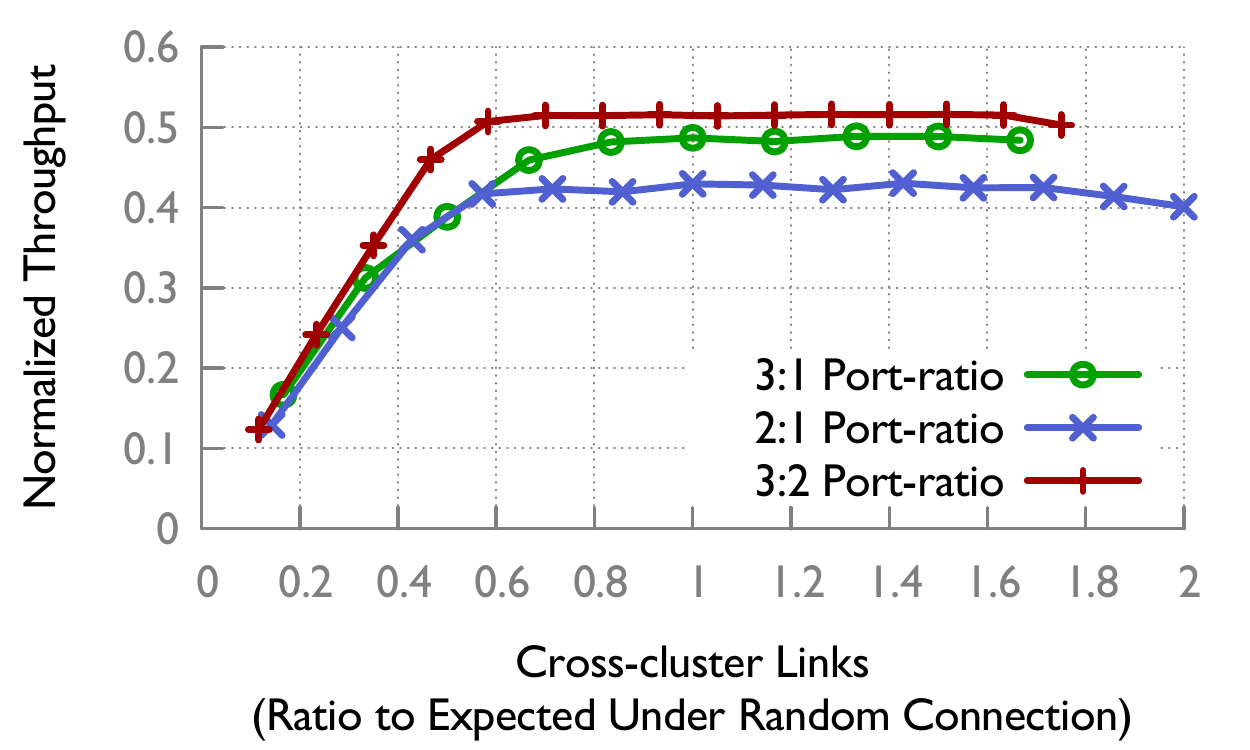}}
\subfigure[]{ \label{fig:hhbias:switchproportion}\includegraphics[width=2.11in]{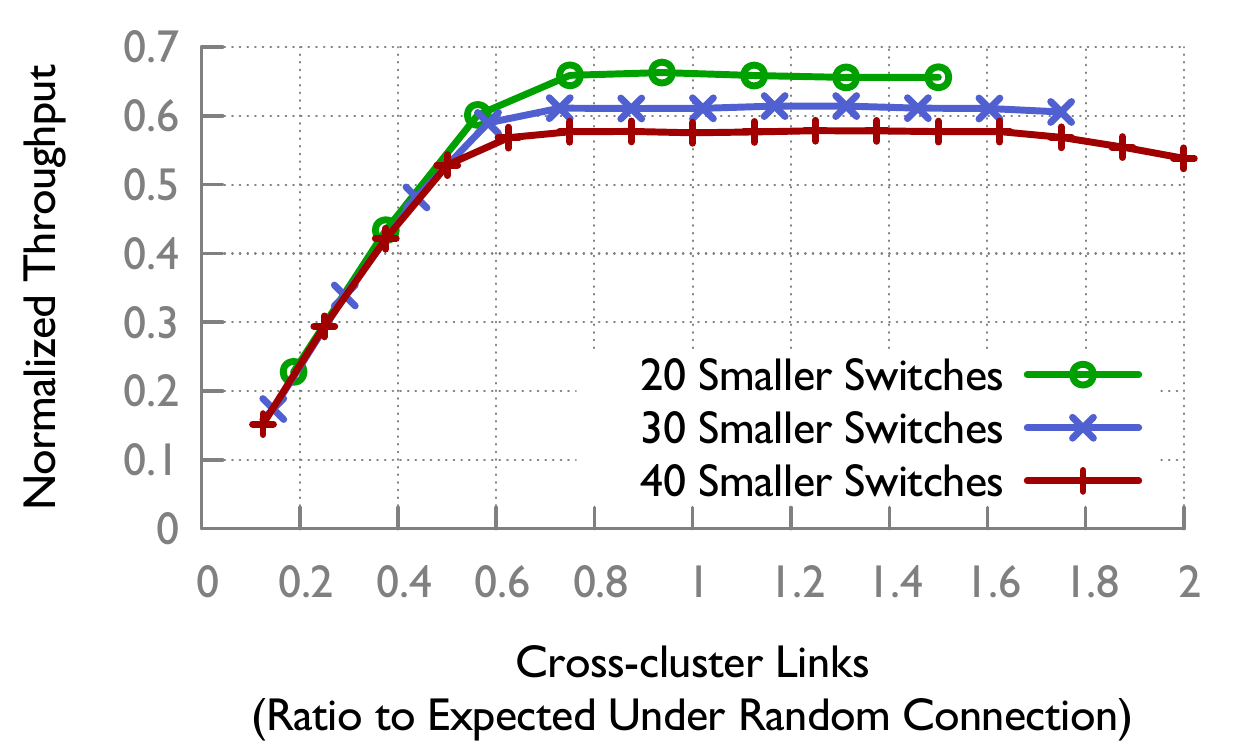}}
\subfigure[]{ \label{fig:hhbias:oversub}\includegraphics[width=2.11in]{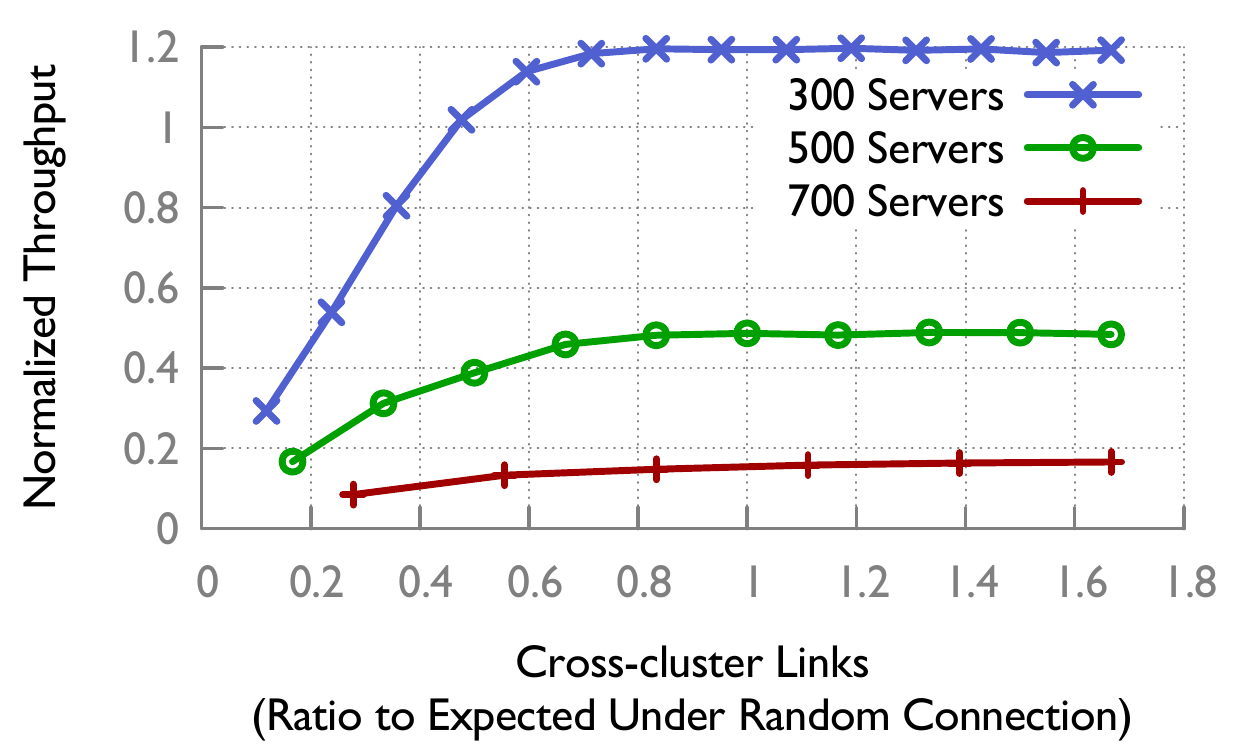}}
\caption{\small \em Interconnecting switches: Peak throughput is stable to a wide range of 
cross-cluster connectivity, regardless of (a) the absolute port counts of switches; (b) the absolute 
counts of switches of each type; and (c) oversubscription in the network.}
\label{fig:bias}
\end{figure*}

\paragraphb{Switch interconnection:} We repeat experiments similar to the
above, but instead of using a uniform random network construction, we vary the 
number of connections across the two clusters of (large and small) 
switches\footnote{Note that specifying connectivity
across the clusters automatically restricts the remaining connectivity to
be within each cluster.}. The distribution of servers is fixed
throughout to be in proportion to the port counts of the switches.

As Fig.~\ref{fig:bias} shows, throughput is surprisingly stable across a
wide range of volumes of cross-cluster connectivity. $x$-axis $=1$ represents the topology
with no bias in construction, \ie vanilla randomness; $x<1$ means the topology is built with
fewer cross-cluster connections than expected with vanilla randomness, etc.
Regardless of the absolute values of the parameters,
when the interconnect has too few connections across the two clusters, throughput drops significantly.
This is perhaps unsurprising -- as our experiments in \S\ref{subsec:cause:exp} will confirm, the cut 
across the two clusters is the limiting factor for throughput in this regime. What \emph{is} surprising, however,
is that across a wide range of cross-cluster connectivity, throughput remains stable at
its peak value. Our theoretical analysis in \S\ref{subsec:analysis} will address this behavior.

\begin{figure}[t]
\centering
\subfigure[]{ \label{fig:bias_dist1}\includegraphics[width=2.3in]{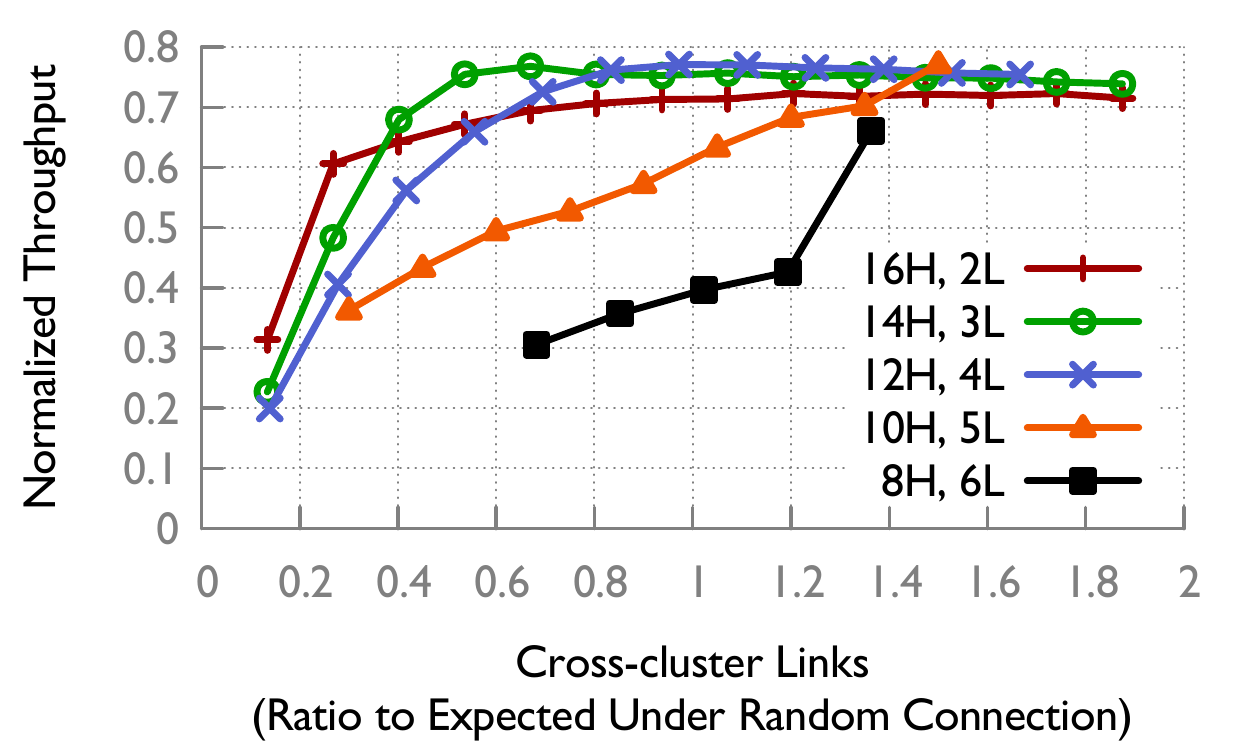}}
\subfigure[]{ \label{fig:bias_dist2}\includegraphics[width=2.3in]{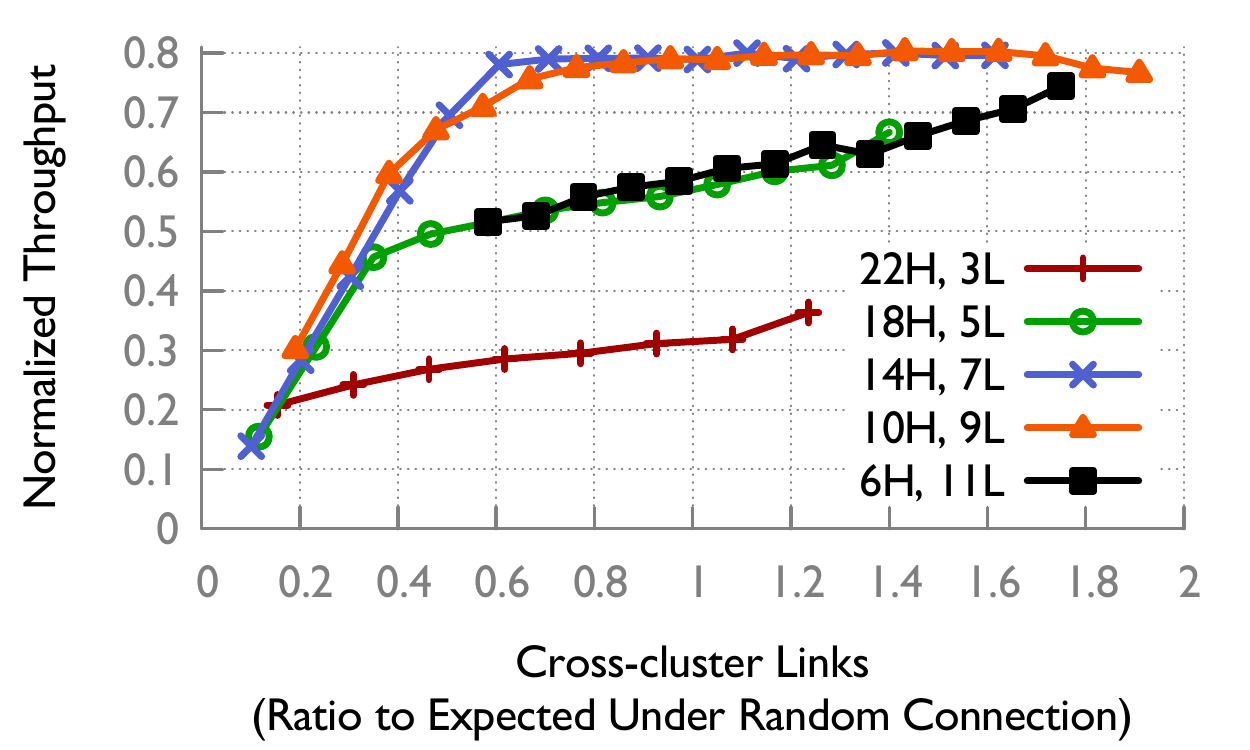}}
\caption{\small \em Combined effect of server distribution and cross-cluster connectivity: Multiple
configurations are optimal, but proportional server distribution with a vanilla random interconnect
is among them. (a) $20$ large, $40$ small switches, with $30$ and $10$ ports respectively. 
(b) $20$ large, $40$ small switches, with $30$ and $20$ ports respectively. Results from $10$ runs.}
\label{fig:bias_dist}
\end{figure}

\paragraphb{Combined effect:} The above results leave open the
possibility that joint optimization across the two parameters (server placement and switch connectivity pattern) can yield better results. Thus, 
we experimented with varying both parameters simultaneously as well. Two representative results
from such experiments are included here. All the data points in Fig.~\ref{fig:bias_dist1} use
the same switching equipment and the same number of servers. Fig.~\ref{fig:bias_dist2}, likewise,
uses a different set of equipment. Each curve in these figures represents a particular 
distribution of servers.  For instance, `$16$H, $2$L' has $16$ servers attached to each larger
switch and $2$ to each of the smaller ones. On the $x$-axis, we again vary the cross-cluster connectivity
(as in Fig.~\ref{fig:hhbias:portproportion}). As the results show, while there are indeed
multiple parameter values which achieve peak throughput, a combination of distributing servers
proportionally (corresponding to `$12$H, $4$L' and `$14$H, $7$L' respectively in the two figures)
and using a vanilla random interconnect is among the optimal solutions. Large deviations
from these parameter settings lead to lower throughput.

\subsection{Heterogeneous Line-speeds}


Data center switches often have ports of different line-speeds, \emph{e.g.}, tens of $1$GbE ports, 
with a few $10$GbE ports. How does this change the above analysis change? 

To answer this question, we modify our scenario such that the small switches still have only low line-speed ports, while the larger
switches have both low line-speed ports and high line-speed ports. \ankitr{The high line-speed ports are assumed to connect only to other high line-speed ports.} We vary both the server distribution and the cross-cluster connectivity
and evaluate these configurations for throughput. As the results in Fig.~\ref{fig:linespeed1} indicate, the picture is not as clear
as before, with multiple configurations having nearly the same throughput. Each curve corresponds to one particular distribution of
servers across switches. For instance, `$36$H, $7$L' has $36$ servers attached to each large switch, and $7$ servers attached to
each small switch. The total number of servers across all curves is constant. While we are unable to make clear qualitative claims of the
nature we made for scenarios with uniform line-speed, our simulation tool can be used to determine the optimal configuration for such scenarios.

We also investigate the impact of the number and the line-speed of the high line-speed ports on the large switches. For these tests, we
fix the server distribution, and vary cross-cluster connectivity. We measure throughput for various `high' line-speeds 
(Fig.~\ref{fig:linespeed2}) and numbers of high line-speed links (Fig.~\ref{fig:linespeed3}). While higher number or 
line-speed does increase throughput, its impact diminishes when cross-cluster connectivity is too small. 
This is expected: as the bottlenecks move to the cross-cluster edges, having high capacity between the 
large switches does not increase the \emph{minimum} flow.

\begin{figure*}
\centering
\subfigure[]{ \label{fig:linespeed1}\includegraphics[width=2.11in]{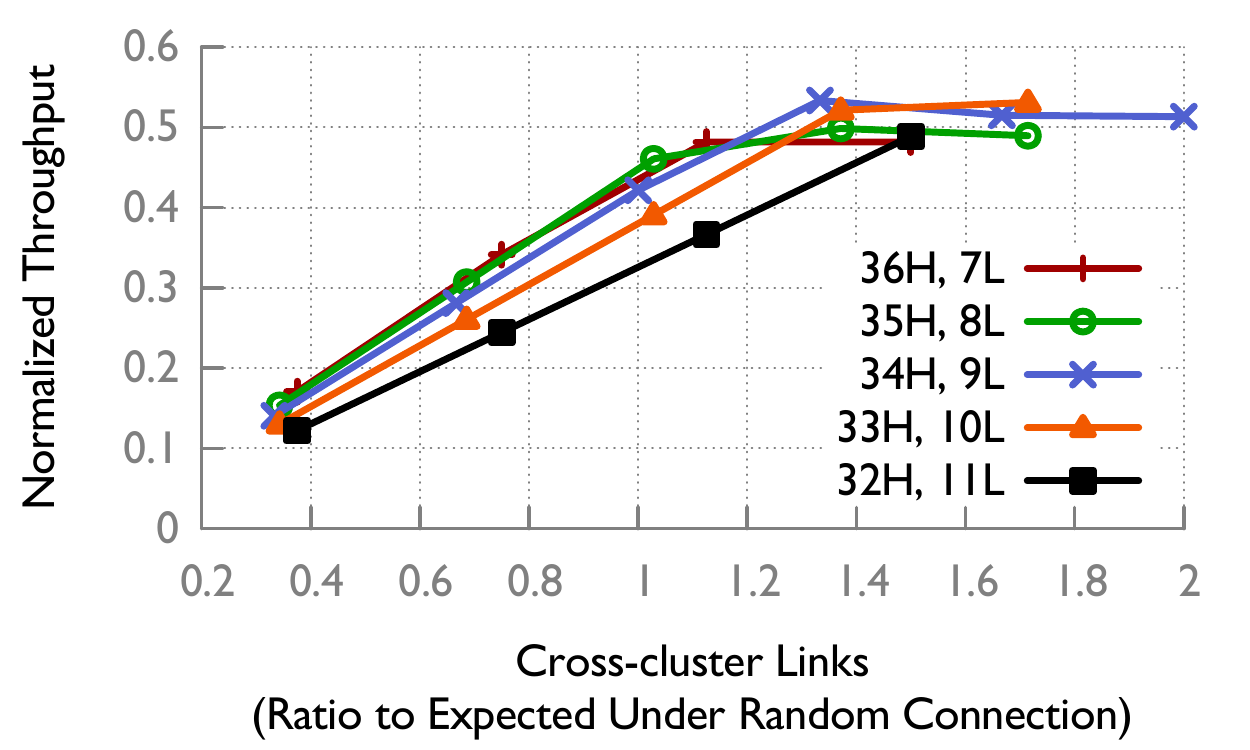}}
\subfigure[]{ \label{fig:linespeed2}\includegraphics[width=2.11in]{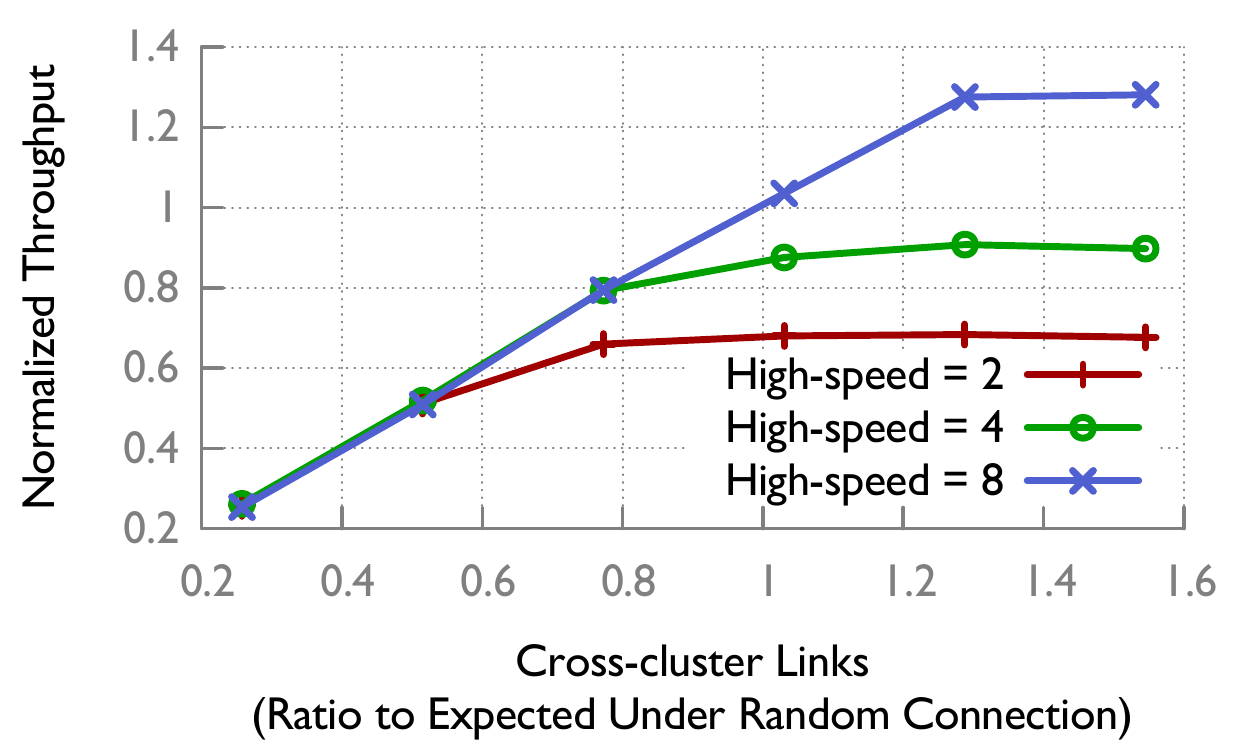}}
\subfigure[]{ \label{fig:linespeed3}\includegraphics[width=2.11in]{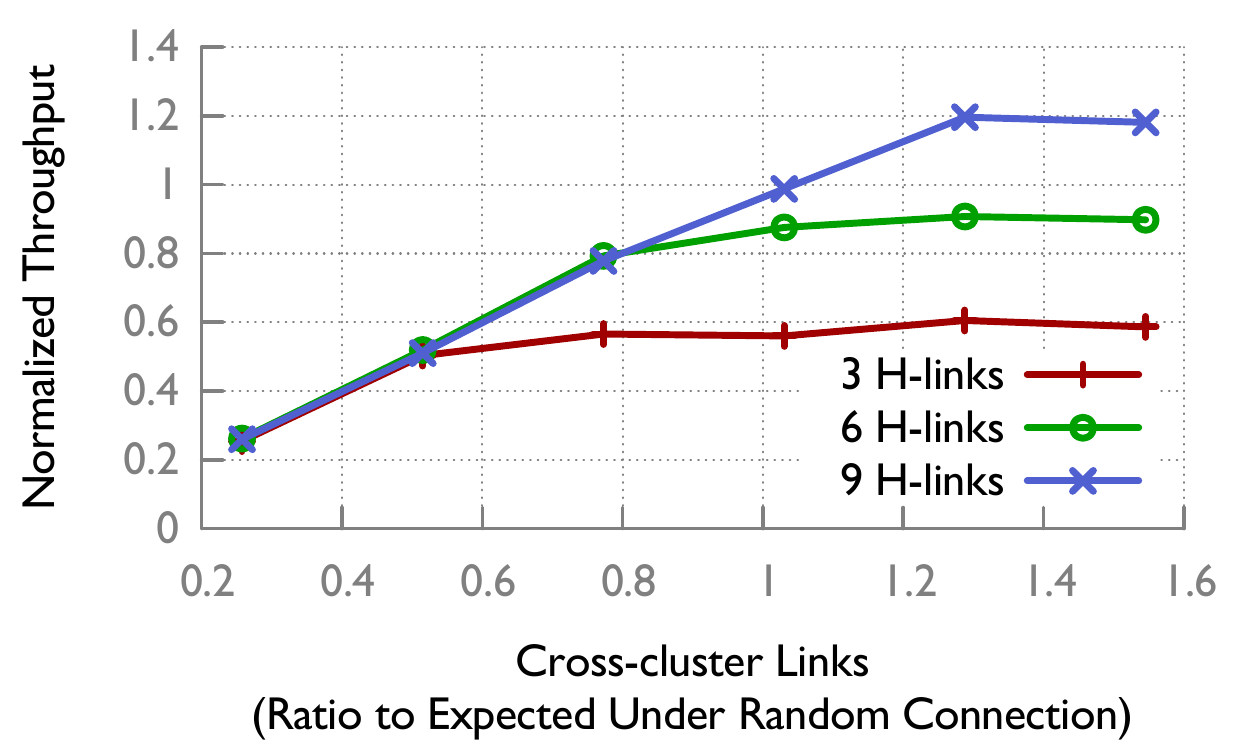}}
\caption{\small \em Throughput variations with the amount of cross-cluster connectivity:
(a) various server distributions for a network with $20$ large and $20$ small switches, 
with $40$ and $15$ low line-speed ports respectively, with the large switches having $3$ additional $10\times$ capacity connections; (b) with 
different line-speeds for the high-speed links keeping their count fixed at $6$ per large switch; and (c) with 
different numbers of the high-speed links at the big switches, keeping their line-speed fixed at $4$ units.}
\label{fig:linespeed}
\end{figure*}

In the following, we attempt to add more than just the intuition for our results. We seek to explain
throughput behavior by analyzing factors such as bottlenecks, total network utilization, shortest path lengths between
nodes, and the path lengths actually used by the network flows.

\vspace{-4pt}
\section{Explaining Throughput Results}
\label{sec:cause}

We investigate the cause of several of the throughput effects we observed in the previous section.  First, in \S\ref{subsec:cause:exp}, we break down throughput into component factors --- network utilization, shortest path length, and ``stretch'' in paths --- and show that the majority of the throughput changes are a result of changes in utilization, though for the case of varying server placement, path lengths are a contributing factor. Note that a decrease in utilization corresponds to a saturated bottleneck in the network.

Second, in \S\ref{subsec:analysis}, we explain in detail the surprisingly stable throughput observed over a wide range of amounts of connectivity between low- and high-degree switches.  We give an upper bound on throughput, show that it is empirically quite accurate in the case of uniform line-speeds, and give a lower bound that matches within a constant factor for a restricted class of graphs.  We show that throughput in this setting is well-described by two regimes: (1) one where throughput is limited by a sparse cut, and (2) a ``plateau'' where throughput depends on two topological properties: total volume of connectivity and average path length $\apl$.  The transition between the regimes occurs when the sparsest cut has a fraction $\Theta(1/\apl)$ of the network's total connectivity.

Note that bisection bandwidth, a commonly-used measure of network capacity which is equivalent to the sparsest cut in this case, begins falling as soon as the cut between two equal-sized groups of switches has less than $\frac{1}{2}$ the network connectivity.  Thus, our results demonstrate (among other things) that bisection bandwidth is not a good measure of performance\footnote{This result is explored further in followup work~\cite{sigmetricsdraft}, where we point out problems with bisection bandwidth as a performance metric.}, since it begins falling asymptotically far away from the true point at which throughput begins to drop.

\subsection{Experiments}
\label{subsec:cause:exp}

Throughput can be exactly decomposed as the product of four factors: \[ T = \frac{C \cdot U}{\apl \cdot AS} = C \cdot U \cdot \frac{1}{\apl} \cdot \frac{1}{AS} \] where $C$ is the total network capacity, $U$ is the average link utilization, $\apl$ is the average shortest path length, and $AS$ is the average stretch, i.e., the ratio between average length of routed flow paths\footnote{This average is weighted by amount of flow along each route.} and $\apl$.  Throughput may change due to any one of these factors.  For example, even if utilization is $100\%$, throughput could improve if rewiring links reduces path length (this explained the random graph's improvement over the fat-tree in~\cite{jellyfish}).  On the other hand, even with very low $\apl$, utilization and therefore throughput will fall if there is a bottleneck in the network.

We investigate how each of these factors influences throughput (excluding $C$ which is fixed).
Fig.~\ref{fig:cause} shows throughput ($T$), utilization ($U$), inverse shortest path length ($1/\apl$), and inverse stretch ($1/AS$).  An increase in any of these quantities increases throughput.  To ease visualization, for each metric, we normalize its value with respect to its value when the throughput is highest so that quantities are unitless and easy to compare.

Across experiments, our results (Fig.~\ref{fig:cause}) show that high utilization best explains high throughput. 
Fig.~\ref{fig:cause:svrdist_oversub} analyzes the throughput results for `$480$ Servers' from 
Fig.~\ref{fig:serverdist:oversub}, Fig.~\ref{fig:cause:bias_oversub} corresponds to `$500$
Servers' in Fig.~\ref{fig:hhbias:oversub}, and Fig.~\ref{fig:cause:linespeed} to `$3$ H-links' in
Fig.~\ref{fig:linespeed3}. Note that it is not 
obvious that this should be the case: Network utilization would also be high if the flows took
long paths and used capacity wastefully. At the same time, one could reasonably expect `Inverse Stretch'
to also correlate with throughput well --- if the paths used are close to shortest, then the flows are not
wasting capacity.  Path lengths do play a role --- for example, the right end of Fig.~\ref{fig:cause:svrdist_oversub} shows an increase in path lengths, explaining why throughput falls about $25\%$ more than utilization falls --- but the role is less prominent than utilization.

Given the above result on utilization, we examined where in the network the corresponding bottlenecks occur. From our linear
program solver, we are able to obtain the link utilization for each network link. We averaged link
utilization for each link type in a given network and flow scenario \ie computing average
utilization across links between small and large switches, links between small switches only, etc.
The movement of under-utilized links and bottlenecks shows clear correspondence to our throughput
results. For instance, for Fig.~\ref{fig:hhbias:oversub}, as we move leftward along the $x$-axis,
the number of links across the clusters decreases, and we can expect bottlenecks to manifest at these
links. This is exactly what the results show. For example, for the leftmost point ($x=1.67$, $y=1.67$)
on the `$500$ Servers' curve in Fig.~\ref{fig:hhbias:oversub},
links inside the large switch cluster are on average $<20\%$ utilized while the links between across clusters are
close to fully utilized ($>90\%$ on average). On the other hand, for the points with higher throughput,
like ($x=1$, $y=0.49$), all network links show uniformly high utilization ($\sim$$100\%$). Similar
observations hold across all our experiments.

\begin{figure*}
\centering
\subfigure[]{ \label{fig:cause:svrdist_oversub}\includegraphics[width=2.11in]{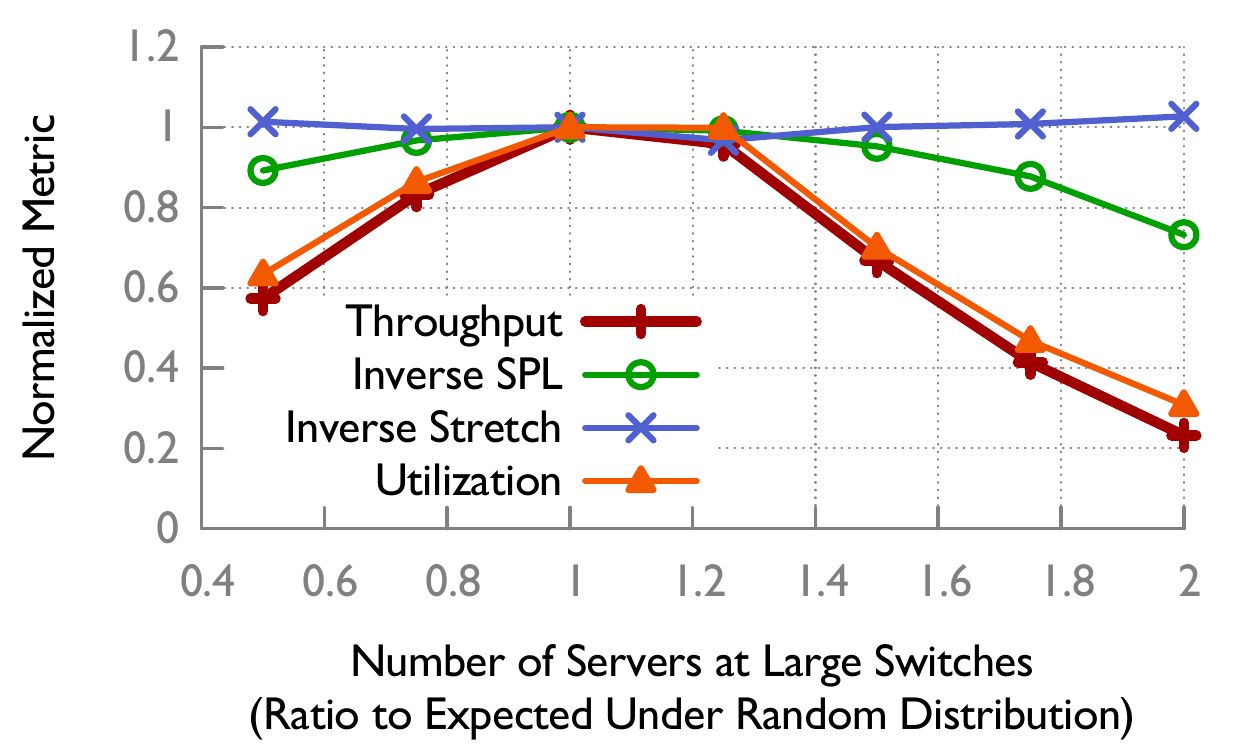}}
\subfigure[]{ \label{fig:cause:bias_oversub}\includegraphics[width=2.11in]{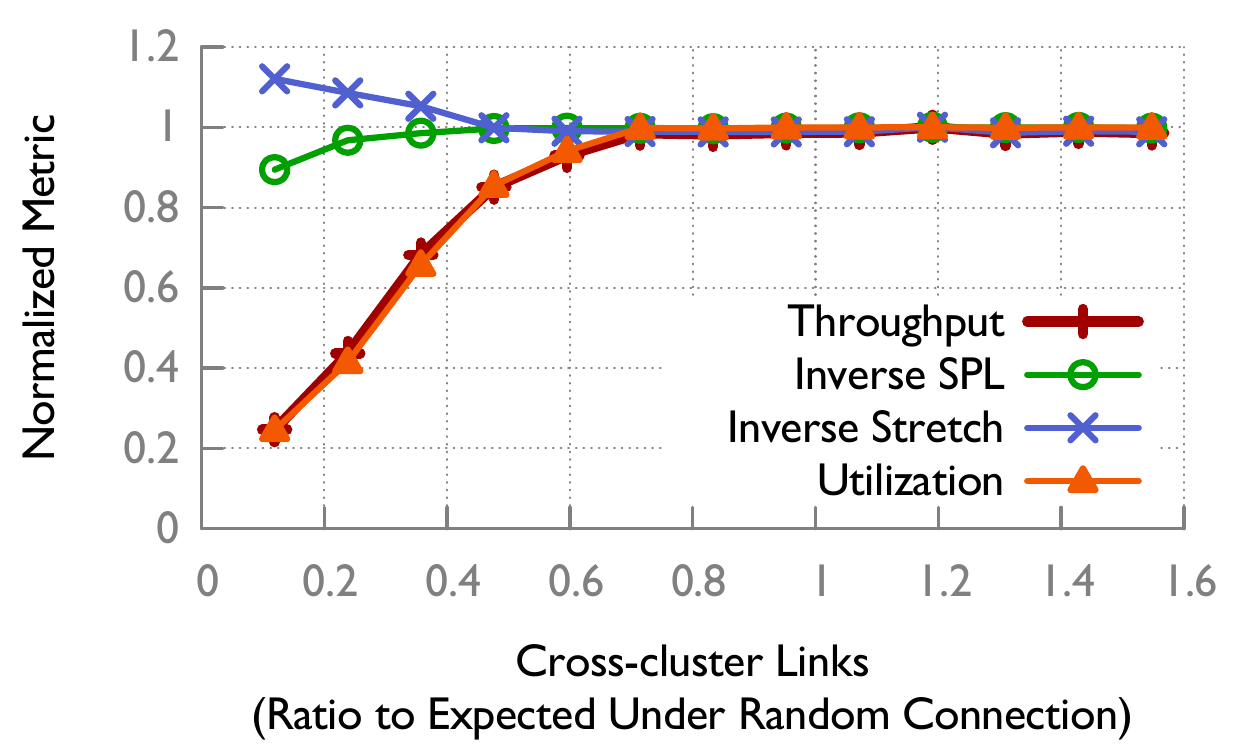}}
\subfigure[]{ \label{fig:cause:linespeed}\includegraphics[width=2.11in]{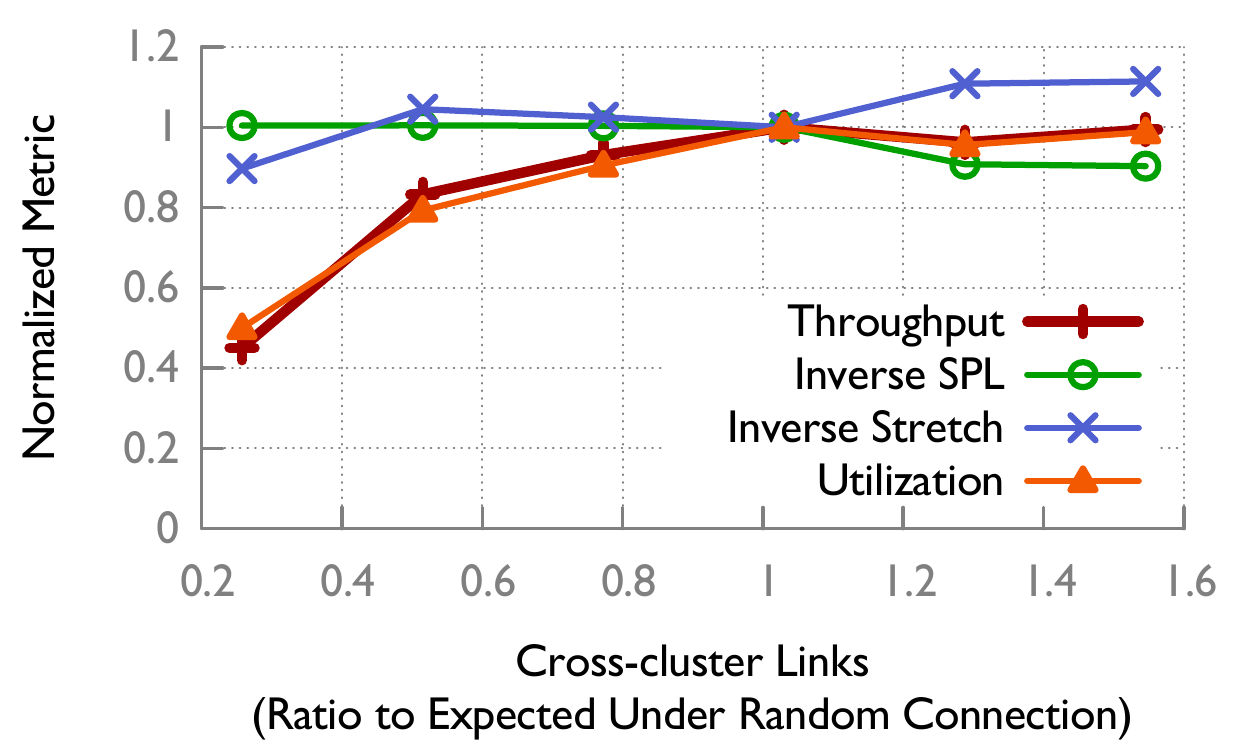}}
\caption{\small \em The dependence of throughput on all three relevant factors: inverse path length, inverse stretch,
and utilization. Across experiments, total utilization best explains throughput, indicating that bottlenecks govern
throughput.}
\label{fig:cause}
\end{figure*}

\renewcommand{\apl}{\langle D \rangle}

\subsection{Analysis}
\label{subsec:analysis}

Fig.~\ref{fig:bias} shows a surprising result: network throughput is stable at its peak
value for a wide range of cross-cluster connectivity. In this section, we provide upper and lower bounds on throughput to explain the result.  Our upper bound is empirically quite close to the observed throughput in the case of networks with uniform line-speed.  Our lower bound applies to a simplified network model and matches the upper bound within a constant factor.  This analysis allows us to identify the point (\ie amount of cross-cluster connectivity) where throughput begins to drop, so that our topologies can avoid this regime, while allowing flexibility in the interconnect.


\begin{figure}[t]
\centering
\subfigure[]{ \label{fig:bounds:bound_hhbias}\includegraphics[width=2.25in]{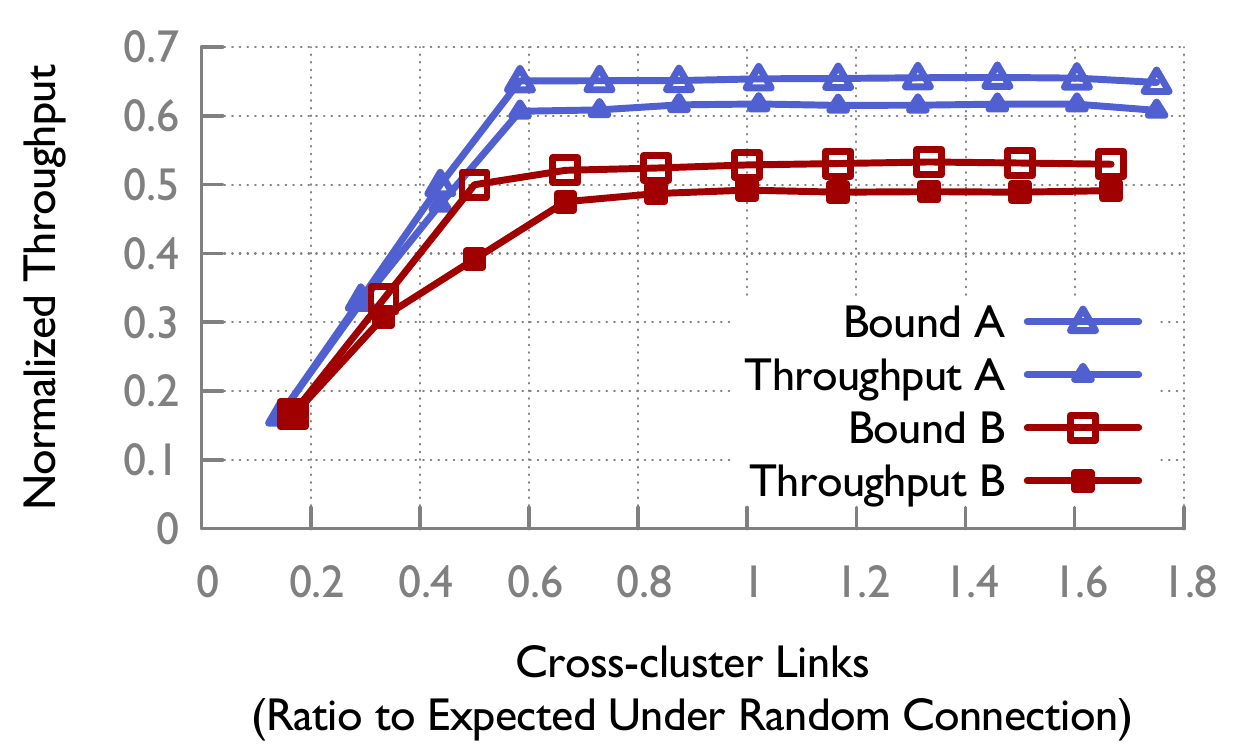}}
\subfigure[]{ \label{fig:bounds:bound_linespeed}\includegraphics[width=2.25in]{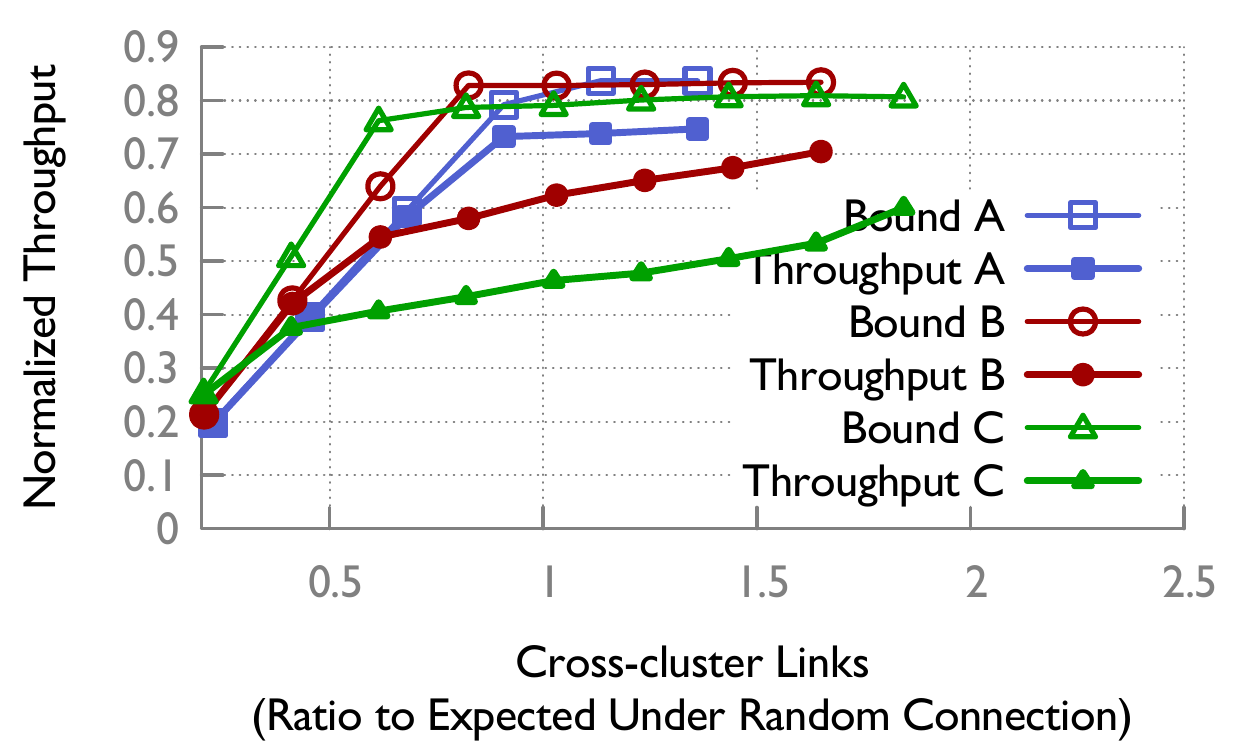}}
\caption{\small \em Our analytical throughput bound is close to the observed throughput for the uniform line-speed scenario (a) for which the bound and the corresponding throughput are shown for two representative cases $A$ and $B$, but can be quite loose with non-uniform line-speeds (b).}
\label{fig:bounds}
\end{figure}

\paragraphb{Upper-bounding throughput.}  We will assume the network is composed of two ``clusters'', which are simply arbitrary sets of switches, with $n_1$ and $n_2$ attached servers respectively.  Let $C$ be the sum of the capacities of all links in the network (counting each direction separately), and let $\bar{C}$ be that of the links crossing the clusters. To simplify this exposition, we will assume the number of flows crossing between clusters is exactly the expected number for random permutation traffic: $n_1 \frac{n_2}{n_1 + n_2} + n_2 \frac{n_1}{n_1 + n_2} = \frac{2 n_1 n_2}{n_1 + n_2}$.  Without this assumption, the bounds hold for random permutation traffic with an asymptotically insignificant additive error.

Our upper bound has two components.  First, recall our path-length-based bound from \S\ref{sec:jellyfish} shows the throughput of the minimal-throughput flow is $T \leq \frac{C}{\apl f}$ where $\apl$
is the average shortest path length and $f$ is the number of flows. For random permutation traffic, $f=n_1 + n_2$.

Second, we employ a cut-based bound. The cross-cluster flow is $\geq T \frac{2 n_1 n_2}{n_1 + n_2}$. This flow is bounded above by the
capacity $\bar{C}$ of the cut that separates the clusters, so we must have $T \leq \bar{C}\frac{n_1 + n_2}{2 n_1 n_2}$.

Combining the above two upper bounds, we have
\begin{equation} \label{eqn:throughput-upper}
	T \leq min \left\{ \frac{C}{\apl (n_1 + n_2)}, \frac{\bar{C}(n_1 + n_2)}{2 n_1 n_2}\right\}
\end{equation}
Fig.~\ref{fig:bounds} compares this bound to the actual observed throughput for two
cases with uniform line-speed (Fig.~\ref{fig:bounds:bound_hhbias}) and
a few cases with mixed line-speeds (Fig.~\ref{fig:bounds:bound_linespeed}). The bound
is quite close for the uniform line-speed setting, both for the cases presented here and several other experiments we conducted, but can be looser for mixed line-speeds.

The above throughput bound begins to drop when the cut-bound begins to dominate.  In the special case that the two clusters have equal size, this point occurs when
\begin{equation}
	\label{eqn:drop-point-upper}
	\bar{C} \leq \frac{C}{2\apl}.
\end{equation}

A drop in throughput when the cut capacity is inversely proportional to average shortest path length  has an intuitive explanation.  In a random graph, most flows have many shortest or nearly-shortest paths.  Some flows might cross the cluster boundary once, others might cross back and forth many times.  In a uniform-random graph with large $\bar{C}$, near-optimal flow routing is possible with any of these route choices.  As $\bar{C}$ diminishes, this flexibility means we can place some restriction on the choice of routes without impacting the flow.  However, the flows which cross clusters must still utilize at least one cross-cluster hop, which is on average a fraction $1/\apl$ of their hops.  Therefore in expectation, since $\frac{1}{2}$ of all (random-permutation) flows cross clusters, at least a fraction $\frac{1}{2\apl}$ of the total traffic volume will be cross-cluster.  We should therefore expect throughout to diminish once less than this fraction of the total capacity is available across the cut, which recovers the bound of Equation~\ref{eqn:drop-point-upper}.

However, while Equation~\ref{eqn:drop-point-upper} determines when the \emph{upper bound} on throughput drops, it does not does not bound the point at which \emph{observed} throughput drops: since the upper bounds might not be tight, throughput could drop earlier or later.  We can, however, construct a bound based on a given throughput value.  Suppose that the optimal throughput, in any configuration, is $T^*$.  Then since $T^* \leq \bar{C}\frac{n_1 + n_2}{2 n_1 n_2}$, throughput must drop below $T^*$ when $\bar{C}$ is less than $C^* := T^* \frac{2 n_1 n_2}{n_1 + n_2}$.  If we are able to empirically estimate $T^*$ (which is not
unreasonable, given its stability), we can determine the value of $\bar{C}^*$ below which throughput \emph{must} drop.

In Fig.~\ref{fig:cval}, we test $18$ different configurations using two
clusters with increasing cross-cluster connectivity (equivalently, $\bar{C}$). 
The one point marked on each curve corresponds to the $\bar{C}^*$ threshold calculated above. 
As predicted, below $\bar{C}^*$, throughput is less than its peak value.

\paragraphb{Lower-bounding throughput.} Here we lower-bound throughput in a restricted class of random graphs.  We show that our throughput upper bound (Eqn.~\ref{eqn:throughput-upper}), and the drop point of Eqn.~\ref{eqn:drop-point-upper}, are correct within constant factors in this case.

\begin{figure}
\centering
\includegraphics[width=2.6in]{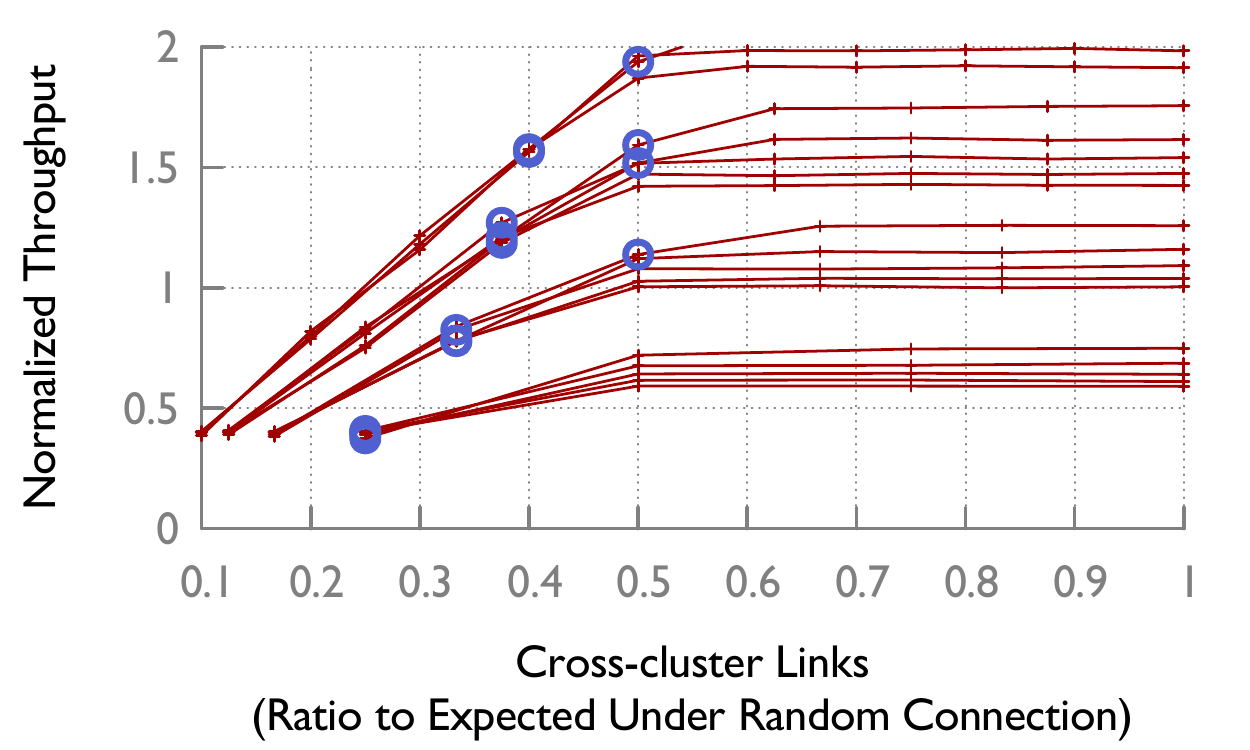}
\caption{\small \em Throughput shows a characteristics profile with respect to varying levels of cross-cluster connectivity. The one
point marked on each curve indicates our analyticallly determined threshold of cross-cluster connectivity below which
throughput must be smaller than its peak value.}
\label{fig:cval}
\end{figure}




We restrict this analysis to networks $G = (V,E)$ in which all $n$ nodes have constant degree $d$, all links have capacity $1$ in each direction, and the vertices $V$ are grouped into two equal size clusters $V_1, V_2$, i.e., $|V_1| = |V_2| = \frac{1}{2} n$.  Let $p,n$ be such that each node has $pn$ neighbors within its cluster and $qn$ neighbors in the other cluster, so that $p+q=d/n = \Theta(1/n)$.  Under this constraint, we choose the remaining graph from the uniform distribution on all $d$-regular graphs. Thus, for each of the graphs under consideration, the total inter-cluster connectivity is $\bar{C} = 2q\cdot |V_1|\cdot|V_2| =q\cdot \frac{n^2}{2}$. Decreasing $q$ corresponds to decreasing the cross-cluster connectivity and increasing the connectivity within each cluster. All our results below hold with high probability (w.h.p.) over the random choice of the graph.  Let $T(q)$ be the throughput with the given value of $q$, and let $T^*$ be the throughput when $p=q$ (which will also be the maximum throughput).

Our main result is the following theorem, which explains the throughput results by proving that while $q\geq q^*$, for some value $q^*$ that we determine, the throughput $T(q)$ is within a constant factor of $T^*$. Further, when $q<q^*$, $T(q)$ decreases roughly linearly with $q$.

\begin{theorem}\label{thm:throughput}
There exist constants $c_1,c_2$ such that if $q^*=c_1\frac{1}{\langle D \rangle}p$, then for $q\geq q^*$ w.h.p. $T(q) \geq c_2T^*$. For $q<q^*$, $T(q) = \Theta(q)$.
\end{theorem}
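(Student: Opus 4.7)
The plan is to prove both regimes of Theorem~\ref{thm:throughput} by matching the upper bound of Eqn.~\ref{eqn:throughput-upper} with constructive lower bounds on achievable multicommodity throughput, identifying $q^*$ as the point at which the two terms of that minimum cross. First I would specialize Eqn.~\ref{eqn:throughput-upper} to this setting: with $n_1=n_2=n/2$, total capacity $C=nd$, and cross-cluster capacity $\bar{C}=qn^2/2$, the path-length term becomes $T(q)\leq d/\langle D\rangle$ and the cut term becomes $T(q)\leq qn$ up to absolute constants. These agree when $qn=d/\langle D\rangle$, i.e.\ when $q=p/(\langle D\rangle-1)$, which justifies setting $q^*=c_1\,p/\langle D\rangle$ for an appropriate $c_1$. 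Above $q^*$ the path-length term dominates; below $q^*$ the cut term dominates and already yields $T(q)=O(q)$.

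Next I would establish $T^*=\Theta(d/\langle D\rangle)$ by invoking the homogeneous analysis of Section~\ref{sec:jellyfish}: when $p=q$ the constrained ensemble reduces essentially to a uniform random $d$-regular graph, and Theorem~\ref{thm:thput_pl} together with the experimental near-tightness documented in Figs.~\ref{fig:optimal} and~\ref{fig:optimal2} places $T^*$ within a constant factor of the path-length bound. Combined with the previous paragraph this already supplies the upper halves of both regimes, so only matching lower bounds remain.

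For the upper regime $q\geq q^*$, I would show that w.h.p.\ the constrained random graph inherits two properties of the unconstrained random $d$-regular graph: average shortest-path length $\Theta(\log_d n)$, so that the path-length bound is $\Theta(d/\langle D\rangle)$, and edge-expansion $\Omega(d)$ across every cut that does not essentially coincide with the $V_1/V_2$ partition. Given these, an oblivious shortest-path routing (or equivalently the Leighton--Rao flow-cut inequality, sharpened to an $O(1)$ gap for expanders) yields a feasible concurrent flow of value $\Omega(d/\langle D\rangle)=\Omega(T^*)$. For the lower regime $q<q^*$, I would route each of the $\Theta(n)$ cross-cluster demands through a uniformly chosen cross-cluster edge and carry the intra-cluster segments over the random $pn$-regular subgraph induced on each cluster, which is itself an expander of conductance $\Omega(1)$; averaging load across the $\Theta(qn^2)$ cross-cluster edges and over the intra-cluster expander then shows $T(q)=\Omega(qn)=\Omega(q)$, matching the cut upper bound.

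The principal obstacle is the structural claim in the upper-regime lower bound. The ensemble of $d$-regular graphs subject to the prescribed $(p,q)$ partition is not the uniform configuration model, so expansion and path-length concentration must be re-established rather than quoted. I would attack this via a switching or two-coloured configuration-model argument on the degree sequence enriched with cluster labels, checking that pair-wise edge dependencies are weak enough that the standard first- and second-moment / expander-mixing arguments for random regular graphs carry over with only constant-factor degradation in the expansion and diameter parameters---precisely the slack that the constants $c_1,c_2$ in the theorem statement permit.
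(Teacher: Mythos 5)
Your overall architecture --- specialize the two-part upper bound of Eqn.~\ref{eqn:throughput-upper}, locate $q^*$ at the crossover of the cut term and the path-length term, and supply matching constructive flows in each regime --- parallels the paper's, and your $q<q^*$ construction (spread each cross-cluster demand over the intra-cluster expanders and average over the $\Theta(qn^2)$ cross-cluster edges) is essentially the paper's Lemma~\ref{lemma:mainlemma}. The main structural differences are that the paper proves everything for the complete bipartite demand graph $K_{V_1,V_2}$ and only at the end reduces random permutation traffic to it by a constant-factor rerouting, and that it obtains both the peak throughput (Lemma~\ref{lemma:peakthput}) and the sparse regime by sandwiching throughput between the non-uniform sparsest cut (computed for arbitrary cuts via the expander mixing lemma in Lemma~\ref{lemma:crossIsSparsest}) and the Linial--London--Rabinovich flow-cut theorem (Theorem~\ref{thm:LLR}), rather than by analyzing a routing scheme.

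Two steps in your plan are genuinely broken. First, you cannot place $T^*$ within a constant factor of the path-length bound by citing the ``experimental near-tightness'' in Figs.~\ref{fig:optimal} and~\ref{fig:optimal2}; that is evidence, not proof, and this lower bound on $T^*$ is precisely the nontrivial content of Lemma~\ref{lemma:peakthput}. Second, your fallback --- the Leighton--Rao flow-cut inequality ``sharpened to an $O(1)$ gap for expanders'' --- is false, and in the wrong direction: constant-degree expanders are exactly the instances witnessing the $\Omega(\log n)$ flow-cut gap. Were the gap $O(1)$ you would conclude $T^*=\Theta(\phi)$, which exceeds your own path-length upper bound by a $\Theta(\log n)$ factor. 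The paper's resolution is that the $O(\log k)$ loss in Theorem~\ref{thm:LLR} is exactly absorbed by the $\langle D\rangle = \Omega(\log n/\log d)$ factor in the path-length upper bound, so the two bounds meet at $\Theta(1/(n\log n))$ for constant $d$. Your remaining route --- a direct oblivious or Valiant-style routing argument with concentration of edge loads on the two-coloured configuration model --- could substitute for LLR, but as written it is exactly the part you defer, so the proposal currently contains no complete proof of the $q\ge q^*$ lower bound. (Your observation that expansion must be re-established for the constrained ensemble is correct; the paper does this by viewing the graph as a union of three independent random regular pieces and applying the expander mixing lemma to each.)
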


Our proof consists of four parts. First, in Lemma~\ref{lemma:peakthput}, we compute the peak value of throughput (within constant factors) $T^*$. In Lemma~\ref{lemma:crossIsSparsest}, we show that the sparsest cut value (defined below) is linear in $q$ for a bipartite demand graph across the clusters\footnote{In general, the sparsest cut is NP-Hard to compute. It is the specific setting that makes this possible.}. In Lemma~\ref{lemma:mainlemma}, we show that for $q\leq q^*$, throughput is within a constant factor of the sparsest cut value and thus reduces linearly with $q$. Finally, we show that for $q > q^*$, throughput is within a constant factor of its peak value.

We will use a celebrated result that can be found in \cite{LLR95} as Theorem $4.1$. We paraphrase it here to suit our needs:

\begin{theorem}[Linial, London, Rabinovich]\label{thm:LLR}
We are given a network $G=(V,E,C)$ with vertices $V$, edges $E$, and their capacities $C$. We are also given a demand graph $H=(V, E')$ with $k=|E'|$ source-sink pairs. For a set $S\subseteq V$, let $\mathsf{Cap}(S)$ be the sum of the capacities of edges connecting $S$ and $S'$ and $\mathsf{Dem}(S)$ be the number of source-sink pairs separated by $S$. Let $T(G,H)$ be the throughput for given $G$ and $H$. Then there exists a set $S\subseteq V$ such that
$$\frac{\mathsf{Cap}(S)}{\mathsf{Dem}(S)}\leq O(\log k)\cdot T(G,H)$$
\end{theorem}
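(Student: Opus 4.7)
My plan is to follow the four-part decomposition the authors announce, using the flow/cut correspondence of Theorem~\ref{thm:LLR} as the principal technical tool and the path-length bound of Theorem~\ref{thm:thput_pl} to pin down the peak throughput $T^*$.

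For Lemma~\ref{lemma:peakthput}, I would set $p=q$, which removes the bipartition constraint so the graph is drawn uniformly from all $d$-regular graphs on $n$ vertices. Combining Theorem~\ref{thm:thput_pl} with the near-tightness of the path-length bound for random regular graphs established in \S\ref{sec:jellyfish} then gives $T^* = \Theta(d/\apl)$ up to an absolute constant, with $\apl = \Theta(\log n)$ w.h.p. For Lemma~\ref{lemma:crossIsSparsest}, the random-permutation demand induces $\Theta(n)$ source--sink pairs crossing the bipartition. Evaluating $\mathsf{Cap}(S)/\mathsf{Dem}(S)$ on the candidate cut $S=V_1$ yields $\bar{C}/\Theta(n) = \Theta(qn)$. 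To show this is the sparsest up to constants, I would use that the intra-cluster $d$-regular subgraph is w.h.p.\ an edge-expander, so any other cut $S$ inherits $\mathsf{Cap}(S) = \Omega(pn \cdot \min(|S\cap V_i|,|V_i\setminus S|))$ from intra-cluster edges alone, giving a ratio of $\Omega(pn) \gg \Theta(qn)$ whenever $q\ll p$, which is precisely the regime where the two parts of the theorem separate.

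For Lemma~\ref{lemma:mainlemma}, the upper bound $T(q) = O(qn)$ is immediate from max-flow/min-cut at the bipartition: $\Theta(n)$ crossing pairs share capacity $\bar{C} = qn^2/2$. For the matching lower bound, I would first invoke Theorem~\ref{thm:LLR} together with Lemma~\ref{lemma:crossIsSparsest} to obtain $T(q) = \Omega(qn/\log n)$, and then sharpen to $\Theta(qn)$ via an explicit two-phase concurrent flow that routes each crossing pair within its origin cluster to the boundary, across the $\Theta(qn^2)$ cross-edges uniformly, and within the destination cluster to its sink, using intra-cluster edge expansion to cap congestion in each phase. Finally, for the plateau $q \geq q^* = c_1 p/\apl$, the cluster-cut ratio $\Theta(qn)$ exceeds $\Omega(pn/\apl) = \Omega(d/\apl) = \Omega(T^*)$, so the cut bound goes slack. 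Here I would construct a near-shortest-path routing in the constrained random graph, observing that the $(p,q)$-split graph remains a global edge-expander with the same $\apl = \Theta(\log n)$ asymptotics, so the path-length-based ceiling of $T^*$ is achievable up to a constant factor by random-walk-like flow distribution, matching the peak from Lemma~\ref{lemma:peakthput}.

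The main obstacle is removing the $O(\log n)$ factor that Theorem~\ref{thm:LLR} leaves behind in both the small-$q$ lower bound and the plateau claim: LLR's generic guarantee only yields $T(q) = \Omega(\text{sparsest cut}/\log n)$, which is weaker than the asserted $\Theta(q)$ scaling and $\Omega(T^*)$ plateau. To close this gap I would rely on graph-specific arguments rather than LLR's black-box rounding: the constrained random $d$-regular graph has, with high probability, both intra-cluster and global edge expansion of $\Theta(1)$ per node, so an explicit construction that sends each flow along near-uniformly-spread short paths can be shown to achieve congestion $O(1)$ per edge. This gives matching constant-factor bounds for both the cut- and path-length-based regimes simultaneously, so that pinning the threshold at $q^* = c_1 p/\apl$ correctly identifies the crossover point between the two asymptotic behaviors.
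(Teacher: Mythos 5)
Your proposal does not prove the statement in question. The statement is Theorem~\ref{thm:LLR} itself --- the Linial--London--Rabinovich approximate max-flow/min-cut theorem asserting that for any network $G$ and demand graph $H$ with $k$ demand pairs there is a cut $S$ with $\mathsf{Cap}(S)/\mathsf{Dem}(S) \leq O(\log k)\cdot T(G,H)$. What you have written is a proof sketch of the paper's Theorem~\ref{thm:throughput} (the two-regime throughput result for the clustered random graph), and it repeatedly \emph{invokes} Theorem~\ref{thm:LLR} as "the principal technical tool." You cannot establish a theorem by using it as a lemma; as a proof of Theorem~\ref{thm:LLR} the argument is circular, and as written it never engages with the actual content of the statement: the existence of a sparse cut certifying near-optimality of the concurrent flow for \emph{arbitrary} networks and demand graphs, not just the specific random constructions you analyze.

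For what it is worth, the paper itself does not prove this theorem either --- it is explicitly cited as Theorem~4.1 of~\cite{LLR95} and paraphrased. A genuine proof would require the machinery of that paper: one takes the LP dual of the maximum concurrent flow problem, which yields a shortest-path metric (a length assignment on edges) witnessing the optimal throughput; one then applies Bourgain's theorem to embed this finite metric into $\ell_1$ with distortion $O(\log k)$ on the demand pairs; and finally one uses the fact that every $\ell_1$ metric is a nonnegative combination of cut metrics, so that some single cut in the combination achieves ratio $\mathsf{Cap}(S)/\mathsf{Dem}(S)$ within the claimed $O(\log k)$ factor of the flow value. None of these ingredients (LP duality, metric embeddings, cut-metric decomposition) appears in your proposal. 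If your intent was to prove Theorem~\ref{thm:throughput}, your outline is broadly aligned with the paper's four-lemma decomposition and your concern about losing the $O(\log k)$ factor is well placed, but that is a different statement from the one you were asked to prove.
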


The minimum of the ratios $\frac{\mathsf{Cap}(S)}{\mathsf{Dem}(S)}$, \ie $\min_{S\subseteq V}\frac{|E_G(S,S')|}{|E_H(S,S')|}$ is referred to as the non-uniform sparsest cut of graph $G$ with a demand graph $H$~\cite{LLR95}. Then, the above theorem immediately implies the following relationship between the sparsest cut $\phi$ and throughput $T$ for a graph $G$ and demand graph $H$:

\begin{equation}\label{eq:sc_vs_tp}
\phi(G,H)\leq O(\log k)\cdot T(G,H)
\end{equation}

In the below, $K_{V_1,V_2}$ refers to the \emph{complete bipartite} demand graph where each node communicates with (and only with) all nodes in the opposite cluster. We shall use this demand graph to prove our results, and then show later that throughput under this demand graph is within a constant factor of throughput under random permutations. 

\begin{lemma}\label{lemma:peakthput}
When $p=q=q_0$, for demand graph $H=K_{V_1,V_2}$, $T(q=q_0)=T^*=\Theta(\frac{1}{n \log n})$.
\end{lemma}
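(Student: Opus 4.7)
The plan is to sandwich $T^*$ between matching upper and lower bounds of order $1/(n\log n)$. When $p=q=q_0$, each vertex has $d/2$ neighbors in its own cluster and $d/2$ across, so $G$ is drawn from a distribution very close to the uniform random $d$-regular graph on $n$ vertices (with $d=\Theta(1)$); two classical facts will be used and both hold w.h.p.\ under this ensemble: (i) the average pairwise shortest-path distance is $\Theta(\log n)$, and (ii) the graph is an edge expander, i.e., $|E(S,\bar S)| = \Omega(|S|)$ for every $S$ with $|S|\le n/2$.

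For the upper bound I invoke the same counting argument as Theorem~\ref{thm:thput_pl}: the total edge capacity is $C=nd$, the number of demand pairs in $H=K_{V_1,V_2}$ is $f=|V_1||V_2|=n^2/4$, and (i) implies the sum of their shortest-path distances is $\Theta(f\log n)$, so $T^* \le C/\sum_i d_i = O(1/(n\log n))$. For the matching lower bound I invoke Theorem~\ref{thm:LLR}: since $\log k = \Theta(\log n)$, it suffices to show $\phi(G,K_{V_1,V_2}) = \Omega(1/n)$. The ``canonical'' cut $S=V_1$ has $\mathsf{Cap}(V_1) = (n/2)(d/2) = nd/4$ and $\mathsf{Dem}(V_1) = (n/2)^2 = n^2/4$, giving ratio $d/n$. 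For any other $S$ with $|S|\le n/2$, (ii) gives $\mathsf{Cap}(S) = \Omega(|S|)$, while writing $s_i = |S\cap V_i|$ yields
\[
\mathsf{Dem}(S) = s_1\!\left(\tfrac{n}{2}-s_2\right) + s_2\!\left(\tfrac{n}{2}-s_1\right) \le \tfrac{n}{2}|S|,
\]
so the ratio is $\Omega(1/n)$. Hence $\phi = \Theta(1/n)$, and Theorem~\ref{thm:LLR} yields $T^* = \Omega(\phi/\log k) = \Omega(1/(n\log n))$, closing the sandwich.

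The main obstacle is handling \emph{all} cuts rather than merely the natural $S=V_1$: one must certify that no set of intermediate size gives a sparser crossing. This is exactly what edge expansion of a random $d$-regular graph delivers, and I would justify it by quoting standard results on edge expansion (or equivalently, the second eigenvalue) of such graphs for $d\ge 3$. A minor subtlety is that our ensemble conditions each vertex on having a prescribed number of neighbors in each cluster, but standard configuration-model arguments show that the same expansion and average-distance estimates continue to hold w.h.p.\ under this conditional distribution.
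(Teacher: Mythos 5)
Your proposal is correct and follows essentially the same route as the paper's proof: the upper bound comes from the capacity/path-length counting argument together with the $\Omega(\log n)$ average distance of a constant-degree random graph, and the lower bound comes from showing the non-uniform sparsest cut for $K_{V_1,V_2}$ is $\Theta(d/n)$ and applying Theorem~\ref{thm:LLR} to lose only the $O(\log k)=O(\log n)$ flow-cut gap. If anything, your treatment of the cut bound is slightly more careful than the paper's, which only remarks that balanced cuts have $O(dn)$ edges, whereas you also certify the $\Omega(1/n)$ ratio for unbalanced sets via edge expansion and the bound $\mathsf{Dem}(S)\le \tfrac{n}{2}|S|$.
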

\begin{proof}

For $q=p$, it is well known~\cite{ellis} that $G$ is an almost optimal expander with high probability, and all the balanced cuts have about the same number of edges being cut, which is $O(d\cdot n)$. Thus the (non-uniform) sparsest cut value is:
\begin{equation}\label{eq:basecase}
\phi(q_0)=\Theta\left(\frac{d\cdot n}{n^2}\right)=\Theta\left(\frac{d}{n}\right)
\end{equation}

From equation \ref{eq:sc_vs_tp}, we obtain that for some constant $c$,
\begin{equation}\label{eq:flowzero}
T(q_0) \geq c\frac{1}{\log k}\phi(q_0)\geq \Omega\left(\frac{1}{\log n}\right)\left(\frac{d}{n}\right)
\end{equation}

For constant $d$, we obtain:
\begin{equation}\label{eq:t_omega}
 T(q_0) \geq \Omega\left(\frac{1}{n\log n}\right)
 \end{equation}

Next, we invoke our path-length based bound: $T \leq \frac{|E|}{\langle D \rangle f}$, which, in this setting implies $T \leq O(\frac{n d}{\apl n^2})$.
Under our graph model (and trivially for $d$-regular graphs), the following result holds \cite{chunglu1,chunglu2} for average shortest path length $\apl$:
 \begin{equation}\label{eq:APL}
 \apl \geq \Omega\left(\frac{\log n}{\log d}\right)
 \end{equation}

Using this result, for constant $d$, we obtain $T \leq O(\frac{1}{n\log n})$, which, together with equation~\ref{eq:t_omega}, yields the lemma's result.
\end{proof}

\begin{lemma}\label{lemma:crossIsSparsest}
For $H=K_{V_1, V_2}$, $\phi(G,H) = \Theta(q)$.
\end{lemma}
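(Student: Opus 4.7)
The plan is to establish $\phi(G,H) = \Theta(q)$ by proving the two matching directions separately, with all statements holding with high probability over the random choice of $G$. For the upper bound $\phi(G,H) \leq 2q$, I would simply exhibit the cluster cut $S = V_1$: every node in $V_1$ has exactly $qn$ neighbors in $V_2$, so $\mathsf{Cap}(V_1) = qn^2/2$, while $\mathsf{Dem}(V_1) = |V_1||V_2| = n^2/4$, giving ratio $2q$.

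For the matching lower bound $\phi(G,H) \geq \Omega(q)$, the strategy is to write every cut $S$ in terms of $a_i = |S \cap V_i|$, $A_i = S \cap V_i$, $B_i = V_i \setminus A_i$, so that $\mathsf{Dem}(S) = a_1 b_2 + a_2 b_1 = (n/2)(a_1+a_2) - 2a_1 a_2$, and then use the exact identity (immediate from the fact that each vertex has exactly $qn$ cross-cluster neighbors)
\[
e(A_1, B_2) + e(A_2, B_1) = qn(a_1+a_2) - 2e(A_1, A_2) = 2q\,\mathsf{Dem}(S) + 4q a_1 a_2 - 2e(A_1, A_2).
\]
Since $\mathsf{Cap}(S)$ dominates the cross-cluster contribution $e(A_1,B_2)+e(A_2,B_1)$, the problem reduces to upper-bounding $e(A_1, A_2)$ uniformly over all choices of $A_1, A_2$.

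I would handle this in two regimes. When $a_1 a_2$ is not too small, I would apply the expander mixing lemma to the random $qn$-regular bipartite subgraph between $V_1$ and $V_2$, whose second singular value is $\lambda = O(\sqrt{qn})$ w.h.p.\ by a Friedman-type theorem. EML gives $e(A_1,A_2) \leq 2q a_1 a_2 + \lambda \sqrt{a_1 a_2}$, which substituted into the identity above yields $e(A_1,B_2) + e(A_2,B_1) \geq 2q\,\mathsf{Dem}(S) - 2\lambda \sqrt{a_1 a_2}$; since $\sqrt{a_1 a_2} \leq (a_1+a_2)/2$ and $\mathsf{Dem}(S) \geq n(a_1+a_2)/4$ for $|S| \leq n/2$, the correction is only $O(\sqrt{q/n})\cdot \mathsf{Dem}(S)$, which is subsumed by $2q\,\mathsf{Dem}(S)$ once $qn$ exceeds a suitable constant, leaving ratio $\Omega(q)$.

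The remaining case—very small or highly unbalanced cuts, where the EML error dominates the main term—is the main obstacle. For these I would fall back on the edge expansion of the within-cluster subgraphs: each is a random $pn$-regular graph on $n/2$ vertices, which w.h.p.\ has edge-expansion $\Omega(pn)$ for subsets of size at most $n/4$, giving $e(A_i, B_i) \geq \Omega(pn\, a_i)$. Combining this with the $qn\,a_i$ cross-cluster stubs leaving each $A_i$ (of which at least a constant fraction must cross the cut once $a_1 a_2$ is small) yields $\mathsf{Cap}(S) \geq \Omega((p+q)n\,|S|) = \Omega(d\,|S|)$, and pairing this with $\mathsf{Dem}(S) \leq (n/2)|S|$ gives ratio $\Omega(d/n) \geq \Omega(q)$. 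Together, the two regimes cover all cuts and establish $\phi(G,H) = \Omega(q)$, completing the proof.
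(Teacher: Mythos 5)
Your proof follows essentially the same route as the paper's: both parametrize an arbitrary cut by its intersections with $V_1$ and $V_2$, use the demand formula $\mathsf{Dem}(S) = (n/2)(a_1+a_2) - 2a_1a_2$, lower-bound the cut capacity via expander-mixing estimates on the within-cluster and cross-cluster subgraphs, and identify $S=V_1$ as the extremal cut with ratio $2q$. The only substantive difference is bookkeeping: you isolate $e(A_1,A_2)$ through an exact degree-counting identity and explicitly track the mixing lemma's additive error (splitting into regimes), whereas the paper simply asserts each edge class is within a constant factor of its expectation --- your version is, if anything, the more careful of the two.
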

\begin{proof}
In the most general case, a cut in G can be described by the vertex sets $S=(k_1 \in V_1) \cup (k_2 \in V_2)$ and $S'=V \setminus S$, so that abritrary subsets $k_1$ and $k_2$ of $V_1$ and $V_2$ respectively, are separated from the rest of the graph by the cut. Then:
 \begin{multline}\label{eq:edgesum}
 E_G(S,S') = E_G(k_1, V_1 \setminus k_1) + E_G(k_2, V_2 \setminus k_2) + \\ E_G(k_1, V_2 \setminus k_2) + E_G(k_2, V_1 \setminus k_1)
\end{multline}

Note that $k_1$ and $k_2$ are both subgraphs of random regular graphs $V_1$ and $V_2$ of degree $pn$ (using only the internal edges of each cluster). Also, across the clusters $V_1$ and $V_2$, we have a bipartite expander graph of degree $qn$. According to the expander mixing lemma~\cite{ellisMixing}, the number of cut-edges across subgraphs of each of these expanders is within a constant factor of the expected number of edges.
Thus, for some constants $c_l$, $c_m$, and $c_n$:
\begin{equation}\label{eq:expander1}
 E_G(k_1, V_1 \setminus k_1) \geq c_l pn  k_1 \frac {n/2 - k_1}{n/2} = c_l k_1 pn (1 - 2k_1/n)
 \end{equation}

\begin{equation}\label{eq:expander2}
 E_G(k_2, V_2 \setminus k_2) \geq c_m k_2 pn (1 - 2k_2/n) 
 \end{equation}

 \begin{multline}\label{eq:expander3}
E_G(k_1, V_2 \setminus k_2) + E_G(k_2, V_1 \setminus k_1) \geq \\ c_n (k_1 qn (1 - 2k_2/n) + k_2 qn (1 - 2k_1/n))
\end{multline}

Using $c_{min} = min\{c_l, c_m, c_n\}$, $k = k_1 + k_2$, and degree $d = pn + qn$, we obtain (after simplification) from the above equations: 

\begin{equation}
E_G(S,S') \geq c_{min} (kd + 2pk^2 + 4k_1k_2d/n)
\end{equation}

For $H=K_{V_1, V_2}$, $E_H(S,S') = kn/2 - 2k_1k_2$. With a fixed $k$, it is easy to show that $E_G(S,S')/E_H(S,S')$ is minimized when $(k_1,k_2)=(0,k)$ or $(k,0)$; the minimum value being $c_{min} \frac{2d - 4kp}{n}$. For $k \in (0,n/2]$, the minimum value of this expression is $c_{min}\frac{2d - 2pn}{n} = 2qc_{min}$. Thus $E_G(S,S')/ E_H(S,S') \geq 2q c_{min}$, and further, $\phi(G, H)  = \min_{S\subseteq V}\frac{|E_G(S,S')|}{|E_H(S,S')|} \geq 2q c_{min}$. To conclude the lemma's proof, we note that $\frac{|E_G(V_1,V_2)|}{|E_H(V_1,V_2)|} = 2q$ implies that $\phi(G, H) \leq 2q$.
\end{proof}

\begin{lemma}\label{lemma:mainlemma}
For any constant $c_1$, if $q^*=c_1\frac{1}{\apl}p$, then for $q<q^*$, $T(G,H) \leq \phi(G,H) = \Theta(q)$. Further, there is a constant $c_2$ (that depends on $c_1$) such that $T(G,H)\geq c_2\phi(G,H) = c_2q$. Thus, for $q<q^*$, $T(G,H) = \Theta(q)$.
\end{lemma}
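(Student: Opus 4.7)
The plan has two parts. First, for the upper bound, $T(G,H) \leq \phi(G,H)$ is standard: any feasible multicommodity flow at rate $T$ must push $T \cdot |E_H(S,S')|$ units across every cut $(S,S')$, which is bounded by $|E_G(S,S')|$; minimizing over $S$ gives the inequality. Combined with Lemma~\ref{lemma:crossIsSparsest}, this already yields $T(G,H) \leq \phi(G,H) = \Theta(q)$, so the upper side of $T=\Theta(q)$ is immediate and in fact unconditional on $q$.

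For the lower bound $T(G,H) \geq c_2 q$, I would exhibit an explicit feasible multicommodity flow. Since $p+q = d/n$ with $d$ constant and $q \leq q^* \ll p$, each intra-cluster induced subgraph $G[V_i]$ is a random regular graph of constant degree $pn = \Theta(1)$ on $n/2$ vertices, hence an expander with average shortest-path length $\Theta(\log n) = \Theta(\apl)$. For every demand pair $(u,v) \in V_1 \times V_2$, route through a uniformly chosen cross-cluster edge $(a,b)$: take a shortest path $u \to a$ inside $V_1$, then the cross edge $(a,b)$, then a shortest path $b \to v$ inside $V_2$. Assign per-flow rate $T = c_2 q$. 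The average load on a cross-cluster edge is then $(n^2/4)/(qn^2/4) \cdot T = T/q = c_2$, which is at most $1$ for small $c_2$. The average load on an intra-cluster edge is $T \cdot (n^2/4) \cdot \Theta(\apl) / \Theta(pn^2) = \Theta(T\apl/p)$, which under $q \leq q^* = c_1 p/\apl$ becomes $\Theta(c_1 c_2) \leq 1$ for $c_2$ small enough. Choosing $c_2$ proportional to $1/c_1$ satisfies both constraints and makes $c_2$ depend on $c_1$ as required.

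The main obstacle is upgrading these \emph{average} load bounds to a \emph{worst-case} guarantee holding with high probability for every individual edge. For cross-cluster edges this is a straightforward Chernoff-plus-union-bound argument, since each edge receives a binomial number of demands with mean $1/q$, and there are only $qn^2/4$ such edges to union-bound over. For intra-cluster edges the argument is more delicate because shortest-path routings in a fixed graph can concentrate load on particular edges. The cleanest remedy is either (a) to adopt a Valiant-style two-phase scheme that first routes $u$ to a uniformly random intermediate node in $V_1$ before heading to $a$, which distributes load evenly over the random expander by standard routing-on-expanders arguments, or (b) to invoke the fact that bipartite demand graphs over constant-degree expanders admit $O(1)$ flow-cut gap rather than the general $O(\log k)$ gap of the LLR theorem used in Lemma~\ref{lemma:peakthput}. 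Either approach, combined with the upper bound, gives $T(G,H) = \Theta(q)$ for $q<q^*$.
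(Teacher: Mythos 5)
Your upper bound is exactly the paper's: $T(G,H)\leq\phi(G,H)$ is the trivial direction of the flow--cut relation, and Lemma~\ref{lemma:crossIsSparsest} gives $\phi(G,H)=\Theta(q)$, so that half is fine (and, as you note, unconditional on $q$). For the lower bound, your initial construction (shortest paths through a uniformly chosen cross-cluster edge) is not the paper's, and you are right to flag that an average-load calculation does not by itself yield a feasible flow; your remedy (a) is essentially what the paper actually does. The paper routes each $(u,v)\in V_1\times V_2$ demand of $\Theta(q)$ by splitting it equally over all $n/2$ nodes of $V_1$ as intermediates, each of which splits its share equally over its $qn$ cross edges; this places $\Theta(1)$ load on every cross-cluster edge \emph{deterministically}, so no Chernoff-plus-union-bound step is needed there. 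The step your proposal leaves as ``standard routing-on-expanders arguments'' is the one place where a concrete argument is still required, and the paper closes it by reduction rather than by per-edge concentration: the induced intra-cluster traffic is (essentially) a complete demand graph on $V_1$ at rate $\Theta(q)$ per pair, and since $q<q^*=\Theta\bigl(\frac{d}{n\log n}\bigr)$ while Lemma~\ref{lemma:peakthput} already established that a constant-degree random regular expander supports the complete demand graph at per-pair rate $\Theta\bigl(\frac{1}{n\log n}\bigr)$, the intra-cluster portion is feasible. This is cleaner than your option (b), whose $O(1)$ flow-cut gap claim you would still have to justify for this demand graph. So your plan is sound and converges on the paper's argument, but to be complete you should replace the gestured-at expander-routing step with this explicit reduction to Lemma~\ref{lemma:peakthput}; the paper additionally records that $T(q^*)=\Theta(T^*)$ at the boundary, which it needs for the subsequent lemma.
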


\begin{proof}

In Lemma~\ref{lemma:crossIsSparsest}, we have shown that $\phi(G,H) = \phi(q)=\Theta(q)$. This allows us to conclude that $T(G,H) \leq \phi(G,H) = \Theta(q)$, since the flow cannot be greater than the sparsest cut.

To show that $T(G,H)\geq c_2\phi(G,H)$ for $q < q^*$, it suffices to show that a flow of value $\Theta(q)$ can be supported on our network. In the following, we show the existence of such a flow, sending $\Theta(q)$ units between every pair of nodes $(u,v) \in V_1 \times V_2$. 

With each node $u \in V_1$ having $qn$ edges to $V_2$, we have $qn^2$ edges across the $V_1$-$V_2$ cut. We route flow $\Theta(q)$ between each $(u,v)$ as follows: $u$ splits the flow equally to all nodes in $V_1$ sending each $\Theta(q/n)$. Each receiving node $l \in V_1$ further splits the flow equally across all its $qn$ cross-cluster edges, sending $\Theta(1/n^2)$ of the $(u,v)$-flow over each edge. Thus, each cross cluster edge carries $\Theta(n^2 \times 1/n^2) = \Theta(1)$ flow, and the constant can be adjusted such that the unit capacity constraint is satisfied. Further note that the flow between each pair of nodes $(u, w) \in V_1 \times V_1$ is $\Theta(n \times q/n) = \Theta(q)$. In our regime, $q < p / \log n < \frac{d}{n \log n}$. As we already showed in Lemma~\ref{lemma:peakthput}, for a random regular graph, throughput (even for the complete demand graph) is $\Theta(\frac{d}{n \log n})$, and hence this flow is feasible. The same argument applies to internal flow in $V_2$ where flow from each cross-cluster edge is split again to the destinations.

Lastly, note that $T(q=q^*) = \Theta(q^*)= c_1\cdot p\frac{\log d}{\log n}=c_3\cdot d\frac{\log d}{n \log n} = \Theta(T^*)$
\end{proof}

Thus far, we have shown that for $q < q^*$, throughput $T(q) = \Theta(q)$ and $T(q=q^*) = \Theta(T^*)$, \ie within constant factor of the peak throughput. The following lemma will establish that $T(q > q^*) = \Theta(T^*)$ and thus prove our result.

\begin{lemma}
For $q > q^*$, $T(q)$ is within a constant factor of the peak throughput $T^*$.
\end{lemma}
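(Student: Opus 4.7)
The plan is to bracket $T(q)$ by matching upper and lower bounds of $\Theta(T^*)$. For the upper bound, I would invoke the path-length inequality of Theorem~\ref{thm:thput_pl} applied to the current graph and to $H = K_{V_1,V_2}$: with total edge capacity $C = nd$ and $f = n^2/4$ source--sink pairs, this gives $T(q) \leq C/(\apl \cdot f) = O(d/(n\apl))$. Because the graph has constant degree $d$, the Cerf-style lower bound $\apl = \Omega(\log n / \log d) = \Omega(\log n)$ from \S\ref{sec:jellyfish} holds independently of the split $(p,q)$, so $T(q) \leq O(1/(n\log n)) = O(T^*)$ uniformly in the range $q > q^*$.

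For the lower bound, I would reuse the three-hop flow construction from the proof of Lemma~\ref{lemma:mainlemma}, but rescaled: instead of routing $\Theta(q)$ units between each $(u,v) \in V_1 \times V_2$, route only $\phi = \Theta(T^*)$ units in the same template (split uniformly into all of $V_1$, cross once into $V_2$ via the bipartite edges by splitting evenly over each intermediate's $qn$ cross-cluster edges, and then split again internally to reach $v$). Two capacity checks then suffice. First, each cross-cluster edge carries $\Theta(\phi/q)$ total flow; since $q \geq q^* = \Theta(T^*)$, this load is $O(1)$, so the unit-capacity constraint is satisfied after absorbing constants into $\phi$. Second, the induced intra-cluster demand reduces to routing $\Theta(\phi)$ units between every pair in $K_{V_1}$ inside the random $pn$-regular subgraph on $V_1$ (and symmetrically on $V_2$); applying Lemma~\ref{lemma:peakthput} to this subgraph yields peak throughput $\Theta(p/\log n)$, which matches $\Theta(T^*)$ whenever $p = \Theta(d/n)$.

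The main obstacle I anticipate is the corner regime where $q$ approaches $d/n$, leaving $p$ too small for the intra-cluster subgraph on each $V_i$ to supply $\Theta(T^*)$ throughput on $K_{V_i}$ via Lemma~\ref{lemma:peakthput}. I expect to handle this by a symmetric ``dual'' construction in which the roles of intra- and cross-cluster edges are swapped: immediately cross into the opposite cluster, perform Valiant-style routing inside the now-dense bipartite cross-cluster expander, and cross back. The same cut and path-length calculations go through, with the bipartite subgraph playing the role of the backbone that supports $\Theta(T^*)$ flow between arbitrary pairs. Covering $q \in (q^*, d/(2n)]$ by the original construction and $q \in [d/(2n), d/n]$ by the dual construction then handles the full range, completing $T(q) = \Theta(T^*)$.
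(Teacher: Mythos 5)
Your argument is correct and follows essentially the same route as the paper: split at $q \approx p$, reuse the three-hop flow template of Lemma~\ref{lemma:mainlemma} (rescaled to $\Theta(T^*)$, with the cross-cluster edge load $\Theta(T^*/q) = O(1)$ since $q \geq q^* = \Theta(T^*)$) while $p$ remains $\Theta(d/n)$, and fall back on the dense bipartite cross-cluster expander when $q$ dominates. The only differences are cosmetic: the paper invokes the optimal-bipartite-expander property of~\cite{ellis} where you spell out a dual Valiant-style routing, and it leaves the $T(q) = O(T^*)$ direction implicit in the global path-length bound where you restate it explicitly.
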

\begin{proof}

First, we note than when $q > p$, the graph is an optimal bipartite expander and thus throughput is within a constant factor of $T(p=q) = \Theta(T^*)$~\cite{ellis}. When $q^* < q < p$, as we increase $q$, $p$ does not change by more than a factor of $2$. Thus, we can apply the same argument as Lemma~\ref{lemma:mainlemma} to route $\Theta(T^*)$ flow: clearly, increasing $q$ does not decrease flow, and $p$ changing by a constant factor only reduces it by a constant factor at most.
\end{proof}

\begin{proof} [Of Theorem \ref{thm:throughput}]
The above three lemmata directly imply the theorem for the demand graph $H=K_{V_1,V_2}$.  Note further that random permutation traffic demands $P$ can be routed within $H$ at a constant factor lower flow throughput.  Specifically, for each flow $v\to w$ between two nodes in the same cluster $V_1$ in $P$, we can split the flow $v\to w$ into $n/2$ subflows, from $v$ to each node in $V_2$ and from there to $w$.  After handling the other types of traffic (within cluster $V_2$ and across clusters) similarly, this produces a bipartite demand graph with a constant factor larger demands than $H$.  Hence, the theorem is concluded.
\end{proof}

\vspace{-4pt}
\section{Improving VL2}
\label{sec:vl2}

\begin{figure*}
\centering
\subfigure[]{ \label{fig:vl2:vl2comp}\includegraphics[width=2.11in]{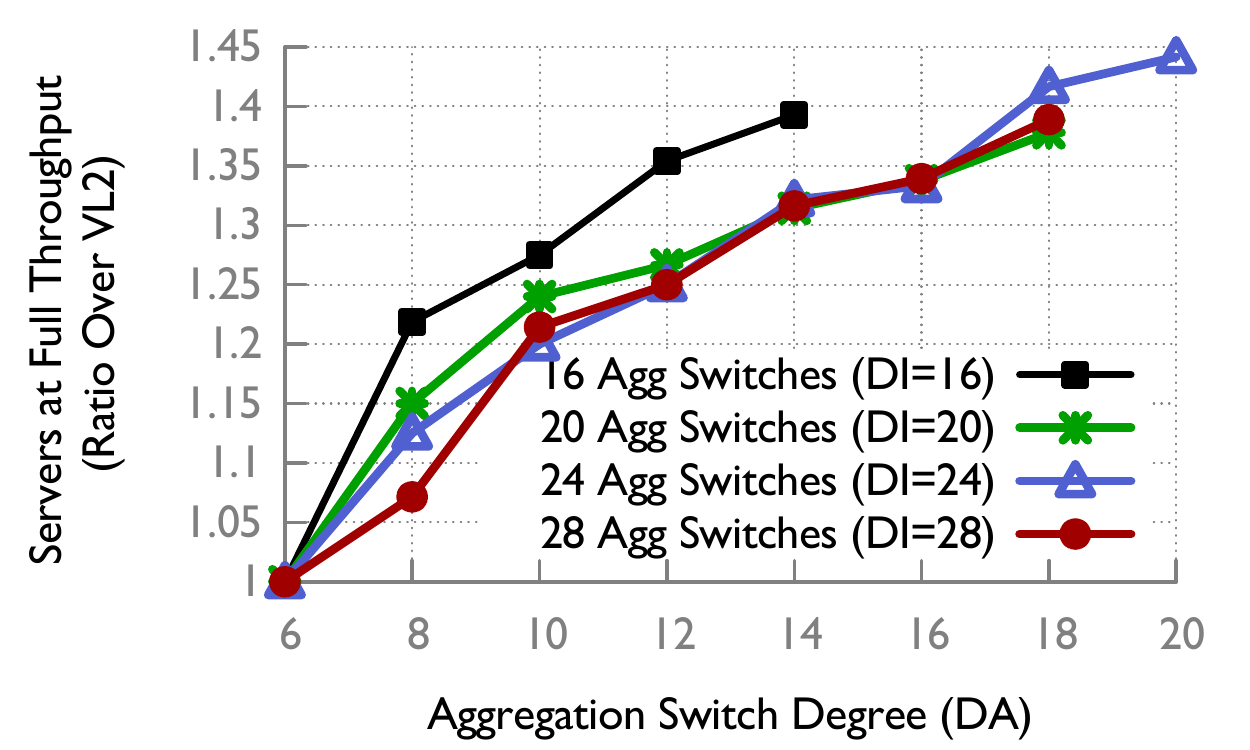}}
\subfigure[]{ \label{fig:vl2:vl2matrices}\includegraphics[width=2.11in]{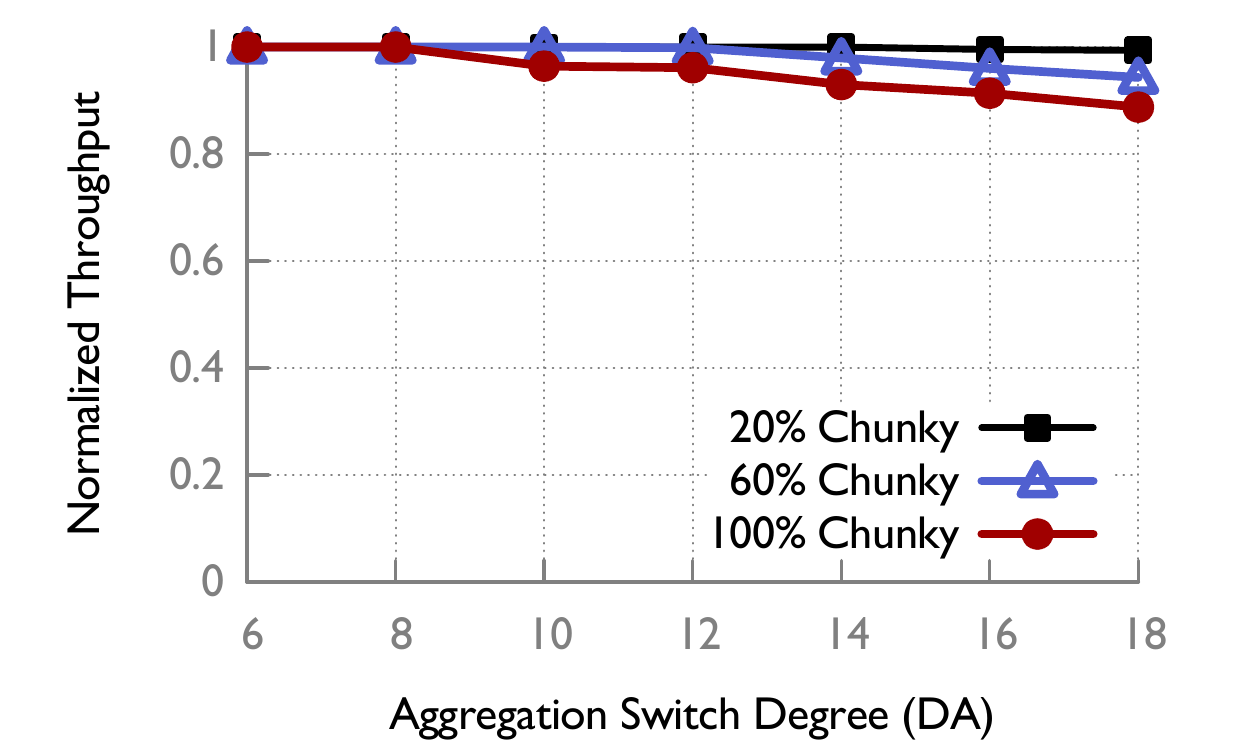}}
\subfigure[]{ \label{fig:vl2:vl2comp_stride}\includegraphics[width=2.11in]{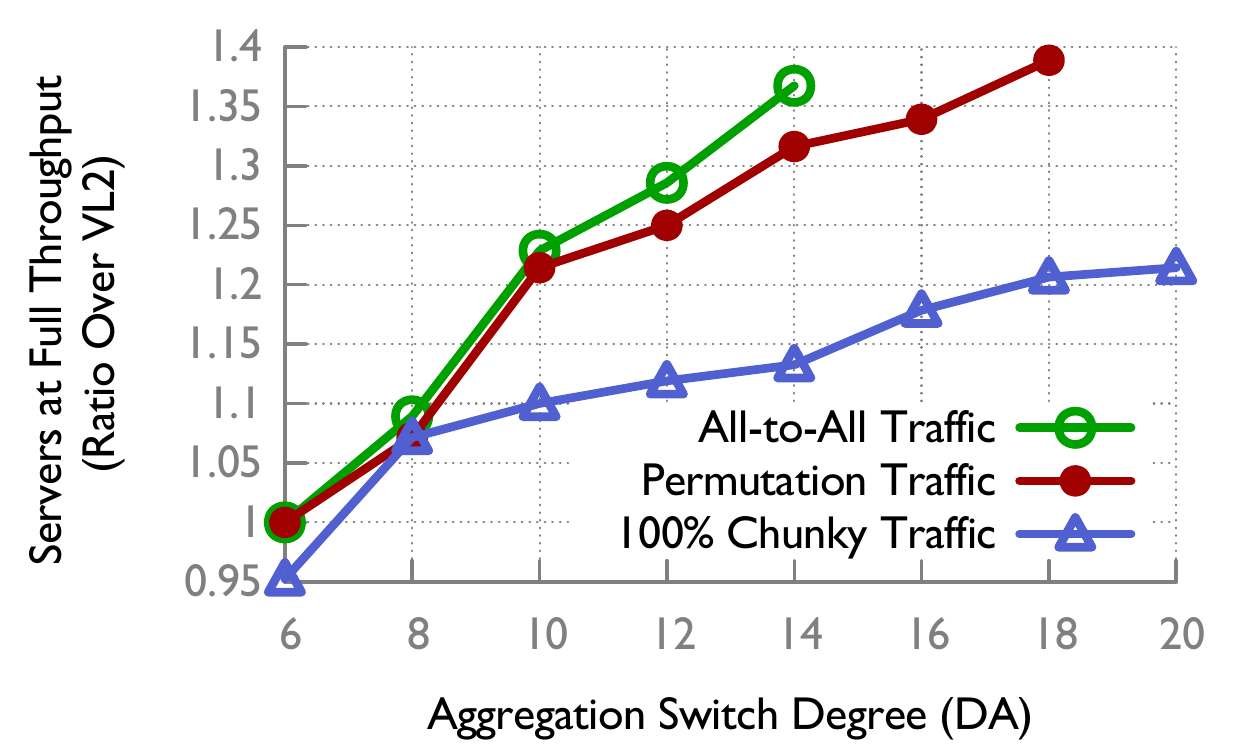}}
\caption{\small \em Improving VL2: (a) The number of servers our topology supports in comparison to
VL2 by rewiring the same equipment; (b) Throughput under various chunky traffic patterns; and (c) The
number of servers our topology can support in comparison to VL2 when we require it to achieve full
throughput for all-to-all traffic, permutation traffic, and chunky traffic.}
\label{fig:vl2}
\end{figure*}

In this section, we apply the lessons learned from our experiments and analysis to
improve upon a real world topology. Our case study uses the VL2~\cite{vl2} topology 
deployed in Microsoft's data centers. VL2 incorporates heterogeneous line-speeds 
and port-counts and thus provides a good opportunity for us to test our design ideas.

\cut{the one past proposal that incorporates significant heterogeneity\footnote{Note
that LEGUP was already analyzed in Jellyfish~\cite{legup} and shown to have significantly
lower throughput due to being restricted to Clos structures.}.}

\paragraphb{VL2 Background:} VL2~\cite{vl2} uses three types of switches: top-of-racks (ToRs), aggregation
switches, and core switches. Each ToR is connected to $20$ $1$GbE servers, and has $2$ $10$GbE
uplinks to different aggregation switches. The rest of the topology is a full bipartite 
interconnection between the core and aggregation switches. If aggregation switches have $DA$
ports each, and core switches have $DI$ ports each, then such a topology supports $\frac{DA.DI}{4}$
ToRs at full throughput.

\paragraphb{Rewiring VL2:} 
As results in \S\ref{subsec:heterports} indicate, connecting ToRs to only aggregation switches, instead of distributing
their connectivity across all switches is sub-optimal. Further, the results
on the optimality of random graphs in \S\ref{sec:jellyfish} imply further gains from using
randomness in the interconnect as opposed to VL2's complete bipartite interconnect. In line with these
observations, our experiments show significant gains obtained by modifying VL2.

In modifying VL2, we distribute the ToRs over aggregation and core switches in
proportion to their degrees. We connect the remaining ports uniform randomly. To measure
our improvement, we calculate the number of ToRs our topology can support at full throughput
compared to VL2. By `supporting at full throughput', we mean observing
full $1$ Gbps throughput for each flow in random permutation traffic across each of $20$ runs. We obtain the largest number of 
ToRs supported at full throughput by doing a binary search.
As Fig.~\ref{fig:vl2:vl2comp} shows, we gain as much as a $43\%$ improvement in
the number of ToRs (equivalently, servers) supported at full throughput at the largest size.
Note that the largest size we evaluated is fairly small -- $2$,$400$ servers for VL2 -- \ankitblue{and
our improvement increases with the network's size.}
	

\vspace{-4pt}
\section{In Practice}
\label{sec:practice}
In this section, we address two practical concerns: (a) performance with a more 
diverse set of traffic matrices beyond the random permutations we have used so far; and 
(b) translating the flow model to packet-level throughput without changing the results significantly.


\subsection{Other Traffic Matrices}
\label{subsec:vl2:matrices}

We evaluate the throughput of our VL2-like topology under other traffic matrices besides
random permutations. For these experiments, we use the topologies corresponding to the `$28$ Agg Switches ($DI$$=$$28$)'
curve in Fig.~\ref{fig:vl2:vl2comp}. (Thus, by design, the throughput for random permutations is expected,
and verified, to be $1$.) In addition to the random permutation, we test the following other traffic 
matrices: (a) All-to-all: where each server communicates with every other server; 
and (b) $x\%$ Chunky: where each of $x\%$
of the network's ToRs sends all of its traffic to \emph{one} other ToR in this set (\ie a ToR-level
permutation), while the remaining $(100-x)\%$ ToRs are engaged in a server-level random 
permutation workload among themselves.

Our experiments showed that using the network to interconnect the same number of servers as in our earlier tests with random permutation traffic, full throughput is still achieved for all but the chunky traffic 
pattern. In Fig.~\ref{fig:vl2:vl2matrices}, we
present results for $5$ chunky patterns. Except when a majority of the network is 
engaged in the chunky pattern, throughput is within a few percent of full throughput.
We note that $100\%$ Chunky is a \ankitblue{hard to route traffic pattern} which is easy 
to avoid. Even assigning applications to servers randomly will ensure that the 
probability of such a pattern is near-zero. 

Even so, we repeat the experiment from Fig.~\ref{fig:vl2:vl2comp} where we had measured the
number of servers our modified topology supports at full throughput under random permutations.
In this instance, we require our topology to support full throughput under the $100\%$ Chunky 
traffic pattern. The results in Fig.~\ref{fig:vl2:vl2comp_stride} show that the gains are smaller,
but still significant, $22\%$ at the largest size, and increasing with size. It is also
noteworthy that all-to-all traffic is easier to route than both the other workloads.

\cut{We stress however, that VL2 is incurring a significant cost to handle an easy to avoid \ankitblue{corner-case}.
This discussion also points to a common misconception: that all-to-all 
traffic (something similar to which might manifest in the map-reduce shuffle phase) 
is hard to design for. (The VL2 paper uses it as a `stress test'.) All-to-all traffic
is, in fact, among the easy ones to design for -- each individual flow is small
because of bottlenecks at hosts, allowing for easier load balancing across the network. 
In contrast, for a pattern like Chunky, with each rack sending all its traffic to another rack,
a large number of flows compete for the same set of close-to-shortest paths between the 
two racks.
As Fig.~\ref{fig:vl2:vl2comp_stride} also shows, if we were only required to design for the 
all-to-all traffic pattern to achieve full throughput, our gains would be even larger than 
for random permutations. (Unfortunately, simulations with all-to-all traffic do not scale to
larger sizes.)}

\subsection{From Flows to Packets}
\label{subsec:routing}

Following the method used by Jellyfish~\cite{jellyfish}, we use Multipath TCP (MPTCP~\cite{mptcp})
in a packet level simulation to test if the throughput of our modified VL2-like topology
is similar to what flow simulations yield. We use MPTCP with the shortest paths, using
as many as $8$ MPTCP subflows. The results in Fig.~\ref{fig:routing} show that throughput
within a few percent ($6\%$ gap at the largest size) of the flow-level simulations is achievable. Note that we deliberately
oversubscribed the topologies so that the flow value was close to, but less than $1$. This 
makes sure that we measure the gap between the flow and packet models accurately --- if the
topology is overprovisioned, then even inefficient routing and congestion control may possibly 
yield close to full throughput.

\begin{figure}
\centering
\includegraphics[width=2.3in]{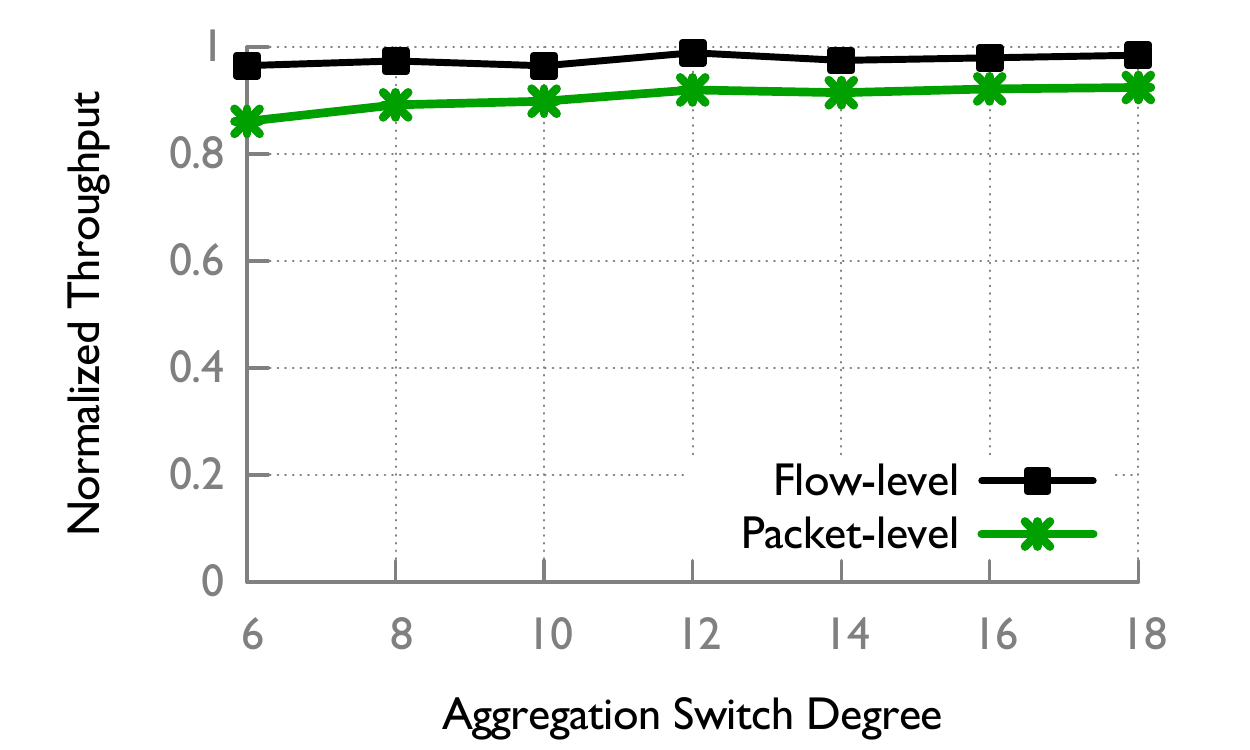}
\caption{\small \em Packet level simulations of random permutation traffic over our topology 
show that throughput within a few percent of the optimal flow-level throughput can be achieved
using MPTCP over the shortest paths.}
\label{fig:routing}
\end{figure}



\vspace{-6pt}
\section{Discussion}
\label{sec:future}

\paragraphb{Why these traffic matrices?} \ankitr{In line with the design objective of hosting arbitrary applications at high throughput, the approach we have taken is to study \emph{difficult} traffic matrices, rather than TMs specific to particular environments.} We show in~\cite{sigmetricsdraft} that an all-to-all workload bounds performance under any workload within a factor of $2$. As such, testing this TM is more useful than any other specific, arbitrary choice. In addition, we evaluate other traffic matrices which are even harder to route (Fig.~\ref{fig:vl2:vl2comp_stride}). Further, our code is available~\cite{codelink}, and is easy to augment with arbitrary traffic patterns to test. 

\paragraphb{What about latency?} We include a rigorous analysis of latency in terms of path length (Fig.~\ref{fig:optimal:pl}, \ref{fig:optimal:pl2}), showing that average shortest path lengths are close to optimal in random graphs. Further, Jellyfish~\cite{jellyfish} showed that even worst-case path length (diameter) in random graphs is smaller than or similar to that in fat-trees. Beyond path length, latency depends on the transport protocol's ability to keep queues small. In this regard, we note that techniques being developed for low latency transport (such as DCTCP~\cite{dctcp}, HULL~\cite{hull}, pFabric~\cite{pfabric}) are topology agnostic.

\paragraphb{But randomness?!} `Random' $\nRightarrow$ `inconsistent performance': the standard deviations in throughput are $\sim$$1\%$ of the mean (and even smaller for path length). Also, by `maximizing the minimum flow' to measure throughput, we impose a strict definition of fairness, eliminating the possibility of randomness skewing results across flows. Further, Jellyfish~\cite{jellyfish} showed that random graphs achieve flow-fairness comparable to fat-trees under a practical routing scheme. Simple and effective physical cabling methods were also shown in~\cite{jellyfish}.

\paragraphb{Limitations:} While we have presented here foundational results on the design of both homogeneous and heterogeneous topologies, many interesting problems remain unresolved, including:
(a) a non-trivial upper bound on the throughput of heterogeneous networks; 
(b) theoretical support for our \S\ref{subsec:heterports} result on server distribution; and
(c) generalizing our results to arbitrarily diverse networks with multiple switch types.

Lastly, we note that this work does not incorporate functional constraints such as those imposed by middleboxes, for instance, in its treatment of topology design.

\vspace{-6pt}
\section{Conclusion}

Our result on the near-optimality of random graphs for homogeneous
network design implies that homogeneous topology design may be reaching 
its limits, particularly when uniformly high throughput is desirable. 
The research community should perhaps focus its efforts on other aspects of the problem, 
such as joint optimization with cabling, or topology design for
specific traffic patterns (or bringing to practice research proposals 
on the use of wireless and/or optics for flexible networks that adjust 
to traffic patterns), or improvements to heterogeneous network 
design beyond ours.

Our work also presents the first systematic approach to the design
of heterogeneous networks, allowing us to improve upon a deployed data
center topology by as much as $43\%$ even at
the scale of just a few thousand servers, with this improvement
increasing with size. In addition, we further the understanding 
of network throughput by showing how cut-size, path length, and 
utilization affect throughput.

While significant work remains in the space of designing and analyzing
topologies, this work takes the first steps away from the myriad point
solutions and towards a theoretically grounded approach to the problem.

\vspace{-6pt}
\section*{Acknowledgments}
We would like to thank Walter Willinger and our anonymous NSDI $2014$ reviewers 
for their valuable suggestions. We gratefully acknowledge the support of Cisco Research Council Grant $573665$.
Ankit Singla was supported by a Google PhD Fellowship.


{\bibliographystyle{abbrv}
\setlength{\bibsep}{0pt}
{
\bibliography{paper}
}
\end{document}